\newtheorem{theorem}{Theorem}
\newtheorem{definition}{Definition}
\newtheorem{problem}{Problem}
\begin{document}

\title{\bf\huge Proximity Graph Maintenance for Fast Online  Nearest Neighbor Search}

\author{\vspace{0.3in}\\ Zhaozhuo Xu, Weijie Zhao,  Shulong Tan,
 Zhixin Zhou, Ping Li\\\\
Cognitive Computing Lab\\
Baidu Research\\
10900 NE 8th St. Bellevue, WA 98004, USA\\\\
\texttt{\{zhaozhuoxu, zhaoweijie12, laos1984, zhixin0825, pingli98\}@gmail.com}
}

\date{}

\maketitle
\begin{abstract}\vspace{0.3in}

\noindent\footnote{Currently, Zhaozhuo Xu is  a PhD student at Rice University, Weijie Zhao is  an Assistant Professor at Rochester Institute of Technology, and Zhixin Zhou is an Assistant Professor at the City University of Hong Kong. The work was conducted in 2018 and 2019 while Zhaozhuo Xu was a  research intern (and a graduate student at Stanford) and both Zhixin Zhou and Weijie Zhao were Postdoctoral Researchers, in Cognitive Computing Lab, Baidu Research.}Approximate Nearest Neighbor (ANN) search is a fundamental technique for (e.g.,) the deployment of recommender systems. Recent studies bring proximity graph-based methods into  practitioners' attention---proximity graph-based methods outperform other solutions such as quantization, hashing, and tree-based ANN algorithm families. In current recommendation systems, data point insertions, deletions, and queries are streamed into the system in an online fashion as users and items change dynamically.
As proximity graphs are constructed incrementally by inserting data points as new vertices into the graph, online insertions and queries are well-supported in proximity graph.
However, a data point deletion incurs removing a vertex from the proximity graph index, while no proper graph index updating mechanisms are discussed~in~previous~studies. To tackle the challenge, we propose an incremental proximity graph maintenance (IPGM) algorithm for online ANN. IPGM  supports both vertex deletion and insertion on proximity graphs. Given a vertex deletion request, we thoroughly investigate solutions to update the connections of the vertex. The proposed updating scheme eliminates the performance drop in online ANN methods on proximity graphs, making the algorithm suitable for practical systems.
\end{abstract}

\newpage\clearpage

\section{Introduction}
	Approximate Nearest Neighbor (ANN) search is a fundamental problem with wide applications in recommendation systems,  search engine, and advertising~\citep{pazzani2007content,vanderkam2013nearest,xu2018deep,fan2019mobius,tan2020fast,tan2021fast,yu2022egm}. Given a query vector $q$ and a dataset $\mathcal{D}$, the goal of ANN search is to find a vector $p\in \mathcal{D}$ that minimizes metric distance $f(p,q)$ with sub-linear time complexity. Here, $f$ represents a distance measure, e.g., Euclidean distance, cosine, Jaccard, GMM (generalized min-max)~\citep{li2018several} etc.  The research on ANN dated back to the  early days of computer science, i.e., KD-tree algorithms~\citep{friedman1975algorithm,friedman1977algorithm}.  Even recently, tree-based methods are still active research topics~\citep{lin1994tv,berchtold1996x,zezula1998approximate,sakurai2000tree,ram2019revisiting}.  Another popular line of research on ANN is based on hashing, including random sampling, (Gaussian and non-Gaussian) random projections, minwise hashing, consistent weighted sampling, extremal process, etc.~\citep{broder1997resemblance,broder1998minwise,indyk1998approximate,gionis1999similarity,charikar2002similarity,datar2004locality,li2005using,weiss2008spectral,shrivastava2012fast,wei2014scalable,erin2015deep,gao2015selective,shen2015supervised,zhu2016deep,li2017theory,li2019sign,li2019random,li2021consistent}. Quantization-based ANN methods are also commonly used~\citep{jegou2010product,ge2013optimized,zhang2014composite,andre2015cache,cao2016deep}.
	
	\vspace{0.1in}
	
	In this paper, we focus on graph-based ANN methods~\citep{paredes2005using,hajebi2011fast,malkov2014approximate,morozov2018non,fu2019fast,tan2019efficient,zhou2019mobius,malkov2020efficient,zhao2020song,tan2021norm}, which have successfully demonstrates their impressive empirical performance. Graph-based ANN search methods share the spirits with idea of six degrees of separation~\citep{karinthy1929chain,guare1990six,kleinberg2000small}. The general paradigm of these methods can be summarized as two steps: First, we index the dataset as a proximity graph $G$, where every vertex represents a data vector and every edge connects two ``close'' vertices. Then, we walk on $G$ towards the query $q$ to obtain its nearest neighbors: we maintain a priority queue that stores the searching vertex candidates. In each search step, we extract the vertex that is closest to $q$ from the priority queue and push all unvisited neighbors of the vertex into the priority queue for the future search steps.
	The proximity graph indexing is constructed in a bootstrap manner. We insert data vectors as new vertices incrementally. Each time a new vertex is inserted, we query on the existing graph index to obtain its nearest neighbors from the priority queue and create edges between them.

	However, graph-based methods do not support ANN search in online settings due to the lack of vertices update mechanism. In real-world applications, the online ANN search means that the dataset $D$ is updated by deletion and insertion for a period of time. For instance, in sponsored search systems for online advertising, the deep Click-Through Rate (CTR) prediction models~\citep{dave2010learning,fan2019mobius,yu2022egm} learn user vectors and Ad vectors and measure them by cosine distance to predict the final CTR. Given a user vector, ANN search is applied to retrieve ad vectors that minimize their cosine similarity. Due to the dynamic changes of sponsored items over time, the online Ads systems will update the ad vectors by deleting the expired Ads and inserting the recently arrived Ads.
    In these settings, although the incremental vertex insertion construction algorithm of graph-based algorithm can handle the upcoming Ads vectors, there is no specialized design for removing expired vertices on graph.  If we preserve the vertices of expired Ads on the graph, the search precision and efficiency will drop because a branch of time is wasted in visiting and ranking useless vertices. If we remove these vertices as well as their edges, the connectivity of the graph will be broken, which will also cause inefficiency in the search phase. There are few literatures on the graph-based online ANN search approaches. Therefore, when graph-based methods are deployed in online systems, a periodical reconstruction of indexing graph is inevitable. As re-indexing the large scale datasets scales as $\mathcal{O}(n\log(n))$~\citep{malkov2014approximate}, which increases the computation and memory cost for online services.

	To address those challenges, this paper studies online ANN search with graph-based indexing methods. We imitate the real-world settings by incrementally performing insertion and deletion in given ANN search vector sets. To dynamically handle the changes in data distribution, we propose an incremental proximity graph maintenance (IPGM) algorithm that supports online insertion and deletion of vertices on graph, which enables the indices update when the distribution of data is evolving periodically.
	
\vspace{0.1in}
	
\noindent\textbf{Summary of main contributions:} \ Firstly,  we formally define the proximity graph maintenance problem and theoretically prove the feasibility of incremental updates on proximity graphs. Then, we propose an online graph maintenance framework and 4 novel proximity graph update strategies that maintain proximity graph properties when deleting vertices. Massive experimental evaluations on public datasets confirm the effectiveness---our recommend global reconnection algorithm outperform the reconstruction baseline in query processing speed by a maximum factor~of~18.8$\%$.

	\section{Related Work}
	
\subsection{Graph-based ANN Search}	
Graph-based ANN search methods~\citep{paredes2005using,hajebi2011fast,malkov2014approximate,morozov2018non,fu2019fast,tan2019efficient,zhou2019mobius,malkov2020efficient,zhao2020song,tan2021norm} have received considerable research attention. Most ANN search scenarios studied by these papers are static. In the experiments section, the graph-based methods are compared with other ANN search methods on a fixed dataset such as Sift~\citep{jegou2010product} or Glove~\citep{pennington2014glove}.
In recommendations systems, one of the core tasks is: given a user embedding, to find relevant item embeddings. This is usually realized by learning an auxiliary data structure concurrently with learning item embeddings~\citep{zhu2018learning,zhu2019joint,zhuo2020learning,gao2020deep,yu2022egm}. \citep{Proc:Tan_WSDM20,tan2021fast} discover that graph-based ANN techniques can be leveraged on complex neural network similarity measure, i.e., user-item matching score, to efficiently retrieve recommendation candidates.
However, in current ANN applications like feed-in Ads recommendation~\citep{fan2019mobius}, the distribution of dataset (e.g., product vectors) changes over time due to the activation and expiration of products. Few works addressed this problem except~\citet{singh2021freshdiskann}. Therefore, modification to graph-based methods for online manner is essential for filling their gap with practical usage. However, as the insertion or deletion of vertex on graph breaks its topology,  direct modification of graph-based methods for online manner is challenging due to the unaffordable computational cost.

\subsection{Online ANN Search}

Besides graph-based ANN methods, some ANN methodologies can be modified for online settings, such as hashing methods and quantization methods. However, their shortcomings lie in the lower speed during the search phase.
	
\citet{ghashami2015binary,leng2015online,cakir2017online} modify the supervised hashing algorithms to accommodate the sequential inputs. These online hashing models achieve promising results in handling sequential input vectors. However, these online hashing requires expensive and infeasible labeling for each data vectors in real-world systems. Meanwhile, these online learning paradigms do not support the expired data vectors removal.

For quantization methods, the main challenge is to update the codebook according to the change of data distribution in database. 	In online settings, the changes in the distribution in database require expensive re-computation of codebook and codewords. To tackle this issue,~\citet{xu2018online} propose a PQ approach that supports data insertion and deletion by updating the codebook without changing the codewords of existing vectors in the database. This online PQ method achieves promising results in sliding window settings.

    \section{Problem Statement}\label{sec:problem}
    Formally, the ANN search problem is defined as follows:
    \begin{problem}[\bf ANN Search]\label{prob:ann}
    given a dataset $D$ and a query vector $q$ in Euclidean space $\mathbb R^d$ we aim at efficiently computing
    \begin{align}\label{eq:ann}
        p=\arg\max_{x\in D} f(x,q).
    \end{align}

    \end{problem}

    Here, we discuss metric similarity such as $f(x,q)=-\|x-q\|$, which minimizes the Euclidean distance, or $f(x,q)=\frac{x^Tq}{\|x\|\|q\|}$, which maximizes the cosine similarity. In most real-world datasets, linear scan is infeasible to solve this problem because $|D|\gg d$. To remedy this issue, we index $D$ as a proximity graph $G=(D,E)$, where each vertex represents a $x_i \in D$ and each pair $(x_i,x_j)$ indicates the connection between vertices. Given each $q$, we find the solution of Eq.~\eqref{eq:ann} by walking on $G$. As a walking path only covers a fraction of vertices on $G$, the speed for solving Eq.~\eqref{eq:ann} is accelerated at the expense of precision.

    For metric similarities that satisfy triangle inequality, an efficient proximity graph $G$ usually approximates \emph{Delaunay graph}, which is defined by \emph{Voronoi cells}.

\vspace{0.1in}

    \begin{definition}\label{def:voronoi}
    For fixed $x_i\in D\subset \mathbb R^d$ and a given function $f$, the Voronoi cells $R_i$ is defined as
    \begin{align*}
    R_i := R_i(f,D):= \{q\in \mathbb R^d\ |\ \forall x\in D, f(x_i, q)\ge f(x,q)\}.
    \end{align*}
    \end{definition}

\vspace{0.1in}

    \emph{Voronoi cells} determine the solution of Eq.~\eqref{eq:ann}. We observe that, $x_j\in \arg\max_{x_i\in D} f(x_i, q)$ if and only if $q\in R_j$. The \emph{Delaunay graph} is the dual diagram of \emph{Voronoi cells}, which is given by the following definition.

\vspace{0.1in}

    \begin{definition}\label{def:delaunay:graph}
        For fixed function $f$ and dataset $D\subset \mathbb R^d$, and given \emph{Voronoi cells} $R_i$, $i = 1, 2, \dots, n$ w.r.t. $f$ and $D$,
        the Delaunay graph of $D$ is an undirected graph $G$ with vertices $\{x_i \in D\}$, and the edge $\{x_i, x_j\}$ exists if $R_i\cap R_j\ne\emptyset$.
    \end{definition}
\vspace{0.1in}

    For graph-based ANN search problem, the guarantee of finding the optimal solution for Eq.~\eqref{eq:ann} on \emph{Delaunay graph} is given in Theorem~\ref{thm:main}, with proof provided in~\citet{morozov2018non}.

\newpage

\begin{theorem}\label{thm:main}
    For given metric similarity $f$, we assume for any dataset $D$, each the Voronoi cell $R_i$ is a connected. Let $G = (D,E)$ be the \emph{Delaunay graph} w.r.t. the Voronoi cells. Then for any $q\in \mathbb R^d$, greedy search on \emph{Delaunay graph} returns the solution of \eqref{eq:ann}. In other words, let $N(x_i) = \{x_j\in D: \{x_i, x_j\}\in E\}$ be the neighbors of $x_i$ on $G$. If $x_i$ satisfies
    \begin{align}\label{eq:main:result}
    f(x_i, q) \ge \max_{x_j\in N(x_i)} f(x_j, q),
    \end{align}
    then $x_i$ is a solution of Eq.~\eqref{eq:ann}.
    \end{theorem}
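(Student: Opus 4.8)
The plan is to prove the contrapositive of the stated implication: I will show that if $x_i$ is \emph{not} a solution of Eq.~\eqref{eq:ann}, then the local optimality condition \eqref{eq:main:result} must fail, i.e.\ $x_i$ has a neighbor on $G$ that is strictly closer to $q$ in the similarity $f$. Since greedy search terminates exactly at a vertex satisfying \eqref{eq:main:result} (no neighbor improves the value), this implication is what is needed. The first step is to translate optimality into the Voronoi language recalled just before the statement: by the observation that $x_j\in\arg\max_{x\in D} f(x,q)$ if and only if $q\in R_j$, the vertex $x_i$ solves Eq.~\eqref{eq:ann} precisely when $q\in R_i$. So I assume $q\notin R_i$ and fix a global maximizer $x^\ast$, for which $q\in R_\ast:=R_{x^\ast}$ with $\ast\neq i$. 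Note $R_i\neq\emptyset$, since $x_i\in R_i$ (for the metric similarities considered one checks $f(x_i,x_i)\ge f(x,x_i)$ for all $x$).

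Next I would run a boundary-crossing argument. Let $g(p):=\max_{x\in D} f(x,p)$, which is continuous in $p$ as a finite maximum of continuous functions, and pick a point $p_0\in R_i$. Take any continuous path $\gamma:[0,1]\to\mathbb{R}^d$ with $\gamma(0)=p_0$ and $\gamma(1)=q$. The map $t\mapsto f(x_i,\gamma(t))-g(\gamma(t))$ vanishes at $t=0$ (as $p_0\in R_i$) and is strictly negative at $t=1$ (as $q\notin R_i$); let $\tau$ be the last time it vanishes. Then $p:=\gamma(\tau)\in R_i$ lies on $\partial R_i$, and for times just past $\tau$ some other vertex $x_j$ strictly beats $x_i$, so $p\in R_i\cap R_j$ and hence $\{x_i,x_j\}\in E$. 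That is, as the path leaves $R_i$ it enters the cell of a genuine graph-neighbor of $x_i$. The connectedness hypothesis on the Voronoi cells is what keeps ``the last moment of leaving $R_i$'' well behaved and guarantees that the cell entered is truly Delaunay-adjacent rather than touching $R_i$ only along a lower-dimensional artifact of a disconnected region.

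The hard part will be the final step: upgrading this neighbor into one that is \emph{strictly better at the specific point $q$}, i.e.\ $f(x_j,q)>f(x_i,q)$, rather than merely better just past the crossing point $p$. In the Euclidean special case this is immediate, because the bisector $\{p:f(x_i,p)=f(x_j,p)\}$ is a hyperplane and a straight segment, once on the $x_j$-side, never returns; choosing $\gamma$ to be the segment $\overline{p_0q}$ then forces $f(x_j,q)>f(x_i,q)$ directly. For a general metric similarity the bisectors are curved and a straight path may re-cross them, so this shortcut is unavailable, and this is exactly where the assumption that every Voronoi cell is connected must do the work. I would use it to produce, in place of a single crossing, a chain of pairwise-adjacent cells $R_i=R_{a_0},R_{a_1},\dots,R_{a_m}=R_\ast$ along which the optima $f(x_{a_t},q)$ strictly increase, so that already the first neighbor $x_{a_1}\in N(x_i)$ satisfies $f(x_{a_1},q)>f(x_i,q)$ and contradicts \eqref{eq:main:result}. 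Establishing that such a monotone adjacency chain exists --- equivalently, that greedy cannot get ``stuck'' at a suboptimal connected cell --- is the technical heart of the argument and the precise point where the connectivity hypothesis is indispensable; the detailed construction follows \citet{morozov2018non}.
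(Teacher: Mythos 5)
The first thing to note is that the paper contains no proof of this theorem at all: immediately before the statement it says the result comes ``with proof provided in \citet{morozov2018non}'', so there is no in-paper argument to compare yours against, and your attempt must be judged on its own terms. Much of it is sound. The contrapositive reduction, the translation ``$x_i$ solves Eq.~\eqref{eq:ann} iff $q\in R_i$'', the observation $x_i\in R_i$ for the similarities in question, and the boundary-crossing construction are all correct (modulo standard bookkeeping you compress: to name the cell being entered you fix a single $x_j$ by pigeonhole over the finite dataset along a sequence $t_n\downarrow\tau$, then use continuity together with $p\in R_i$ to get $f(x_j,p)=f(x_i,p)=\max_{x\in D}f(x,p)$, hence $p\in R_i\cap R_j$ and $\{x_i,x_j\}\in E$). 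Your Euclidean argument is genuinely complete: on a straight segment the bisector of $x_i$ and $x_j$ is a hyperplane crossed exactly once, so $f(x_j,q)>f(x_i,q)$ follows, contradicting \eqref{eq:main:result}; the same linear-bisector reasoning in fact also covers cosine similarity and inner product, whose bisectors are hyperplanes as well.

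The genuine gap is the general case, which is the actual content of the theorem. The ``chain of pairwise-adjacent cells $R_i=R_{a_0},\dots,R_{a_m}=R_\ast$ along which $f(x_{a_t},q)$ strictly increases'' is not a lemma you can invoke: the existence of its very first edge --- a Delaunay neighbor $x_{a_1}$ of $x_i$ with $f(x_{a_1},q)>f(x_i,q)$ --- \emph{is} the theorem in contrapositive form, as you half-concede when you gloss it as ``greedy cannot get stuck at a suboptimal cell.'' Deferring that step to \citet{morozov2018non} means the non-Euclidean case is restated rather than proved. A secondary inaccuracy: you claim connectedness is what keeps the last-exit time well behaved and guarantees genuine Delaunay adjacency at the crossing, but that step needs only continuity of $f$, closedness of the cells, and finiteness of $D$; connectedness must do its work elsewhere --- it is precisely what rules out a disconnected far-away piece of a better point's cell making the local-to-global implication false, and without it the theorem fails. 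To be fair, your treatment mirrors the paper's own: both punt the hard step to the same reference. But as a standalone proof, yours is complete only in the Euclidean/linear-bisector case, and the step you defer is the one that carries all the difficulty.
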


    For large scale dataset, the computation cost of constructing and storing both \emph{Voronoi cells} and \emph{Delaunay graph} are unaffordable. The performance of proximity graphs in ANN search depends on their approximation to \emph{Delaunay graph}.

    In this work, we discuss the maintenance of $G$ in online settings, where the ANN search problem becomes:

     \begin{problem}[\bf Online ANN Search]\label{prob:mips}
    Given a dataset sequence $\{D_1,D_2, \dots D_T | D_i \cap D_{i+1} \neq \emptyset, i=1,2,\dots, T\}$, and a query vector $q$ in Euclidean space $\mathbb R^d$ we aim at efficiently computing
    \begin{align}\label{eq:ann_online}
        p=\arg\max_{x\in D_i} f(x,q) \quad i\in \{1,2,\dots, T\}
    \end{align}

    \end{problem}

    Here, $D_i \cap D_{i+1} \neq \emptyset$ indicates the possibility of data insertion and deletion in each step. Start from proximity graph $G_1$ that approximates \emph{Delaunay graph} of $D_1$, our goal is to develop a proximity graph maintenance algorithm $\emph{update}(G,D)$ so that $G_{i+1}=\emph{update}(G_i, D_{i+1}) \quad i\in \{1,2,\dots, T\}$ is an approximation of \emph{Delaunay graph} with respect to $D_{i+1}$.

    \section{Incremental Proximity Graph}
    \label{sec::ann}

We first theoretically prove that deletion a vertex on Delaunay Graph only requires update of the removed vertex's neighbors' connections. Based on this insight, we propose a high-level incremental proximity graph maintenance framework for online update~settings.

    \subsection{Theoretical Analysis}
    In this section, we prove that we only require to update the edges of the deleted vertex's neighbors to maintain the properties of Delaunay graph---the incremental graph maintenance is feasible in online settings.
    Given the Delaunay graph $G$ of vertices $S$, suppose we remove an vertex $x_0$ from $S$, and
    \begin{enumerate}

    \item suppose $x\in D$ is not a neighbor of $x_0$ in $G$, then the neighbors of $x$ in the new Delaunay graph (corresponding to $S\backslash\{x_0\}$) remains the same as the ones in the old $G$.
    \item suppose $x\in D$ is a neighbor of $x_0$ in $G$, then it's incident edges except $\{x, x_0\}$ remains existing in the Delaunay graph.
    \end{enumerate}
    These properties are rigorously stated in the following theorem.
    Given a graph $G$, $G$ is a \emph{Delaunay graph} or contains \emph{Delaunay graph} as subgraph, we demonstrate that
    \begin{enumerate}
        \item If a vertex $x_i$ is removed from $G$, $G$ still contains a  \emph{Delaunay graph} if we update the connections of $N(x_i)$.
        \item If a vertex $x_i$ is inserted to $G$, $G$ still contains a  \emph{Delaunay graph} if we set up $N(x_i)$.
    \end{enumerate}

    \begin{theorem}\label{thm:update}
        Let $f$ be continuous function such that for any datasets, every Voronoi cell is nonempty and connected. Let $G$ and $G'$ be the Delaunay graphs corresponding to data points $D = \{x_0, \dots, x_n\}$ and $D'=\{x_1, \dots, x_n\} = D\backslash\{x_0\}$. Let $N(x_i, G)$ and $N(x_i, G')$ be the set of neighbors of $x_i$ on graph $G$ and $G'$ respectively, then
        \begin{itemize}[leftmargin=0.23in,nosep]
        \item[(a)]for all $x_i\notin N(x_0, G)$, $N(x_i, G)=N(x_i, G')$;
        \item[(b)]removing $x_0$ and its the incident edges from $G$, the resulting graph is a subgraph of $G'$.
        \end{itemize}
    \end{theorem}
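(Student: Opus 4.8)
The plan is to exploit the monotonicity of Voronoi cells under deletion of a data point and then to upgrade this into exact equalities using the connectedness hypothesis. Write $R_i = R_i(f,D)$ and $R_i' = R_i(f,D')$ for the Voronoi cells before and after deleting $x_0$, where $i\ge 1$. The first observation is that passing from $D$ to $D'=D\backslash\{x_0\}$ only removes the single constraint $f(x_i,q)\ge f(x_0,q)$ from the definition of each cell, so $R_i\subseteq R_i'$ for every $i\ge 1$. Part (b) is then immediate: any edge $\{x_i,x_j\}$ of $G$ with $i,j\ge 1$ satisfies $R_i\cap R_j\ne\emptyset$, and since $R_i'\cap R_j'\supseteq R_i\cap R_j$, the same pair is an edge of $G'$; hence deleting $x_0$ and its incident edges leaves a subgraph of $G'$.

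For part (a), the heart of the argument is the claim that a non-neighbor's cell does not move at all: if $R_i\cap R_0=\emptyset$ then $R_i'=R_i$. I would prove this by a connectedness / intermediate-value argument. Introduce the continuous function $g(q)=f(x_0,q)-f(x_i,q)$. A direct check from the definitions shows that, restricted to the connected set $R_i'$, one has $R_i=\{q\in R_i': g(q)\le 0\}$ while $R_i'\cap R_0=\{q\in R_i': g(q)\ge 0\}$. If the expansion $R_i'\cap R_0$ were nonempty, then, because $R_i$ is nonempty by hypothesis and $R_i'$ is connected, continuity of $g$ together with the intermediate value theorem would force a point $q^\ast\in R_i'$ with $g(q^\ast)=0$; such a point lies in both $R_i$ and $R_0$, contradicting $R_i\cap R_0=\emptyset$. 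Hence $R_i'\cap R_0=\emptyset$ and $R_i'=R_i$.

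Granting the claim, part (a) follows by comparing adjacencies. Fix $x_i\notin N(x_0,G)$, so that $R_i\cap R_0=\emptyset$ and therefore $R_i'=R_i$. For any $j\ge 1$ the decomposition $R_j'=R_j\cup(R_j'\cap R_0)$ (which itself follows from $R_j'\backslash R_j\subseteq R_0$) combined with $R_i\cap R_0=\emptyset$ gives $R_i'\cap R_j'=R_i\cap R_j'=R_i\cap R_j$, so $R_i'\cap R_j'\ne\emptyset$ if and only if $R_i\cap R_j\ne\emptyset$. Therefore $N(x_i,G')=N(x_i,G)$, as required.

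I expect the main obstacle to be the rigorous justification of the claim $R_i\cap R_0=\emptyset\Rightarrow R_i'=R_i$, i.e.\ turning the intuitive picture ``the cell of a non-neighbor cannot grow into the hole left by $x_0$'' into a clean argument; this is exactly where the standing assumption that every Voronoi cell is nonempty and connected is indispensable. Some care will also be needed at boundary points where ties $f(x_i,q)=f(x_0,q)$ occur, since these are precisely the points that must be shown to belong to $R_i\cap R_0$ in the contradiction step.
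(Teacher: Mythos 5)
Your proof is correct, and its analytic core coincides with the paper's: both rest on the monotonicity $R_x(D)\subseteq R_x(D')$, the observation that any expansion of a cell caused by deleting $x_0$ must lie inside $R_{x_0}(D)$, and an intermediate-value argument along the connected cell $R_{x_i}(D')$ applied to $g(q)=f(x_0,q)-f(x_i,q)$, contradicting $R_{x_i}(D)\cap R_{x_0}(D)=\emptyset$; your part (b) is identical to the paper's. The structural difference is where the contradiction is mounted. The paper proves the two inclusions of $N(x_i,G)=N(x_i,G')$ separately: the easy one from monotonicity, and the hard one by contradiction for each putative new neighbor $y$, first showing $R_y(D')\cap R_{x_i}(D')\subseteq R_{x_0}(D)$ and then running the intermediate value theorem along a path from a point of $R_{x_i}(D)$ to a point of that intersection. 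You instead isolate a stronger standalone lemma---a non-neighbor's cell is literally unchanged, $R_i'=R_i$---via the decomposition $R_i'=R_i\cup(R_i'\cap R_0)$ and the fact that $R_i'\cap R_0\ne\emptyset$ alone already produces the contradiction; both inclusions of (a) then follow by pure set algebra, namely $R_i'\cap R_j'=R_i\cap R_j$, with no reference to the cell of any particular candidate neighbor. Your formulation buys two things: the intermediate statement is cleaner and geometrically more informative (invariance of the cell rather than of the adjacency list), and your appeal to the connected-set form of the intermediate value theorem needs only connectedness of $R_{x_i}(D')$, whereas the paper's explicit path $c:[0,1]\to R_{x_i}(D')$ tacitly assumes path-connectedness, which is slightly more than the stated hypothesis.
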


\begin{proof}
(a) We consider two cases.

Firstly, if $N(x_i, G)\subset N(x_i, G')$, we show that, by the definition of Voronoi cells, for all $x\in D'$ where $D'\subset D$, we have
\begin{align*}
    R_x(D)  =&  \bigcap_{y\in D}\{ q\in Y: f(x,q)\ge f(y,q) \} =R_x(D')\\
    &\cap \{ q\in Y: f(x,q)\ge f(x_0,q) \}.
\end{align*}

Hence $R_x(D)\subset R_x(D')$. As a result, $R_x(D)\cap R_y(D)\ne\emptyset$ implies $R_x(D')\cap R_y(D')\ne\emptyset$. Thus $N(x_i, G)\subset N(x_i, G')$.\\

Secondly, if $N(x_i, G')\subset N(x_i, G)$, suppose $y\in N(x_i, G')$ but $y\notin N(x_i, G)$, then by definition of Delaunay graph, we have
\begin{align*}
    R_y(D')\cap R_{x_i}(D')\ne \emptyset
    \quad \text{ and } \quad
    R_y(D)\cap R_{x_i}(D)=\emptyset.
\end{align*}

Since
\begin{align*}
R_y(D)\cap R_{x_i}(D) =& R_y(D')\cap R_{x_i}(D')  \cap\{ q\in Y: f(y,q)\ge f(x_0,q) \} \\
&\cap\{ q\in Y: f(x_i,q)\ge f(x_0,q) \}
=\emptyset
\end{align*}

we have
\begin{align*}
    R_y(D')\cap R_{x_i}(D')\subset &
    \{\{ q\in Y: f(y,q)< f(x_0,q) \}\\
    &\cup\{ q\in Y: f(x_i,q)< f(x_0,q) \}\}
\end{align*}

Hence $R_y(D')\cap R_{x_i}(D')\subset R_{x_0}(D)$. Pick $q\in R_{x_i}(D)$ and $q'\in R_y(D')\cap R_{x_i}(D')$. Since $R_{x_i}(D')$ is connected, there exists a path $c:[0,1]\to R_{x_i}(D')$ connecting $q$ and $q'$. Considering the function
\begin{align*}
  g(t) = f(x_0, c(t)) - f(x_i, c(t))
\end{align*}

we have
\begin{align*}
  g(0)=f(x_0,q)-f(x_i,q)\le 0\qquad g(1)=f(x_0,q')-f(x_i,q')\ge 0
\end{align*}

  By continuity of $f$ and intermediate value theorem, there exists $t^*\in[0,1]$ such that $g(t^*)=0$.    Hence $R_{x_i}(D)$ and $R_{x_0}(D)$ intersect at $c(t^*)$, which contradicts with $x_i\notin N(x_0,G)$.
\\

(b) Suppose $x,y\in D$ and the edge $\{x,y\}\in G$, then $R_x(D)\cap R_y(D)\ne\emptyset$. Hence, $R_x(D)\cap R_y(D)\subset R_x(D')\cap R_y(D')\ne\emptyset$. Therefore, $\{x,y\}\in G'$.
\end{proof}

\newpage

	It was demonstrated in~\citet{navarro2002searching} that greed search on proximity graph is guaranteed to find the exact nearest neighbor for every given query. However, it is computationally infeasible to build exact Delaunay Graph in high-dimensional data because the number of edges grows exponentially as the scale of dataset increases~\citep{beaumont2007peer}. A typical solution towards this issue is to approximate Delaunay Graph and trade accuracy for significant efficiency improvements. Practical applications employ the proximity graph as an approximation of the Delaunay graph. Therefore, we focus on the incremental maintenance on proximity graphs in this paper.

\subsection{High-Level Proximity Graph Maintenance}

	In this section, we introduce our high-level incremental proximity graph maintenance algorithm (Algorithm~\ref{alg:high-level}).

	In each state of the workload, we approximate the Delaunay Graph via a directed proximity graph, where each vertex have edges pointed to its neighbors, namely out-neighbors. We use $W=\{(o_1,x_1),(o_2,x_2),\dots, (o_n,x_n)\}$ to represent the online ANN search workload. For each $(o_i,x_i)\in W$, $o_i$ is the type of the operations--namely query, insert, or delete--and $x_i$ is the corresponding vectors.

\vspace{0.1in}
	\textbf{Query.}
    Querying operation is performed by executing the greedy search on the proximity graph $G$. The algorithm for greedy search is present in  Algorithm~\ref{alg:greedys}. Given a query vector $q$ and a graph $G$, the algorithm first chooses a vertex $x_s$ randomly on $G$. Starting from $x_s$, $q$ walks on $G$. During the walking, we preserve a priority queue where the vertex that maximizes the measurement function $f$ is on top of the queue. For every visited vertex $x$, we push its out-neighbors in the queue after calculating their similarity with $q$.\vspace{0.2in}

            	\begin{algorithm}[h]
                \caption{GREEDY-SEARCH$(G,q,k,Y)$}\label{alg:greedys}
                \begin{algorithmic}[1]
                \STATE\textbf{Input:} graph $G = (D,E)$, query vector $q$, candidate size $k$, removed vertices $Y$.
                \STATE Initialize the priority queue by random sampling, $S\gets \{x_i| x_i\in G, x_i\notin Y\}$.
                \STATE Mark elements in $\{x_i| x_i\in G, x_i\notin Y\}$ as unvisited.
                \STATE Mark $x_i$ as visited.
                \IF {$|S|>k$}
                \STATE $S\gets$ top-$k$ vectors in $S$ that has the largest value of $-\|x-q\|$. We place them in a descending order.
                \ENDIF
                \WHILE {$\exists x\in D$ unvisited and $S$}
                \STATE $S\gets S \cup \{y\in D: x\in S, y\text{ unvisited}, (x,y)\in E, y\notin Y \}$
                \STATE Mark $S$ as visited.
                \IF {$|S|>k$}
                \STATE $S\gets$ top-$k$ vectors in $S$ that has the largest value of $-\|x-q\|$. We place them in a descending order.
                \ENDIF
                \ENDWHILE
                \STATE \textbf{Output:} $C$.
                \end{algorithmic}
                \end{algorithm}

\newpage

    \begin{algorithm}[t]
    \caption{SELECT-NEIGHBORS$(x,S,d,I)$}\label{alg:select}
    \begin{algorithmic}[1]\label{egde_select}
    \STATE\textbf{Input:} vector $x$, $k$-neighbor set $S$ of $x$, out-neighbor degree threshold $d$, invalid set $I$.
    \STATE Set $N_x$ as $\emptyset$.
    \STATE Place $y_i\in S$ in descending order of $-\|x-y_i\|$.
    \STATE $i\gets 1$.
    \WHILE {$|N_x|\le d$ and $i\le |S|$}
    \IF {$\|x-y_i\|\le \min_{z\in N_x}\|z-y_i\|$ and $y_i \notin I$}
    \STATE $N_x\gets N_x\cup\{y_i\}$.
    \ENDIF
    \STATE $i\gets i+1$.
    \ENDWHILE
    \STATE\textbf{Output:} a set of elements $N_x$.
    \end{algorithmic}
    \end{algorithm}

		\begin{algorithm}[h]
    \caption{GRAPH-MAINTENANCE$(W,k,d,f)$}
    \label{alg:high-level}
    \begin{algorithmic}[1]
    \STATE\textbf{Input:} workload $W=\{(o_1,x_1),\cdots, (o_n,x_n)\}$, priority queue length $k$, out-neighbor degree threshold $d$,measurement function $f$
    \STATE $Y\gets\{\}$.
    \FOR{$i = 1$ to $n$}
    \IF {$o_i=query$}
    \STATE $C_i \gets \text{\small GREEDY-SEARCH}(x_i, G,k, f,Y)$.
    \ELSIF {$o_i=Insert$}
    \STATE $C \gets \text{\small GREEDY-SEARCH}(x_i, G,k, f,Y)$.
    \STATE $N \gets \text{\small SELECT-NEIGHBORS}(x_i, C, d,\emptyset)$.
    \FOR{$z\in N$}
    \STATE Add edges $(x_i,z)$ to $G$ and $(z,x_{i})$ to $G'$.
    \ENDFOR
    \ELSIF {$o_i=delete$}
    \IF{$x_i\in G$}
    \STATE \text{\small DELETE-UPDATE-EDGES}
    \ENDIF

    \ENDIF
    \ENDFOR
    \STATE\textbf{Output:} top-$K$ objects $C_i\subset D$ in descending order of $f$ with $x_i$ for every $\{(o_i,x_i)\in W| o_i=query\}$.
    \end{algorithmic}
    \end{algorithm}

\textbf{Insertion.}
When $o_i$ is inserted, we perform a greedy search on $G$ to obtain $x_i$'s Top-K neighbors (same as in the query phase). We then extract the vertex from the queue as the start point for next search step. When the walking stops, we select the Top-K vertices in the queue to return. The next step is to select the Top-K neighbors of $x_i$ found on $G$ and set connections in $G$ and $G'$. Here we adapt the edge selection algorithm~\citep{malkov2014approximate} (Algorithm~\ref{egde_select}). After the selection, $x_i$'s outgoing vertices are its nearest neighbors in diverse~directions.

\newpage

    \textbf{Deletion.}
    If $o_i$ is delete and $x_i\in G$, we start the vertex deletion and graph update process. For every expired vertex $x_i$, we first obtain its incoming vertices on $G$, by the prepared reversed graph $G'$. Let $N'(x_i)$ be the in-neighbors of  vertex $x_i$ (the vertices that points to $x_{i}$ in $G$). \text{DELETE-UPDATE-EDGES} updates the edges of $N'(x_{i})$ to maintain  the proximity graph---the nearest neighbors can be retrieved from the updated graph by the greedy searching algorithm.
    Besides, the reverse graph $G'$ is updated according to $G$.

\vspace{0.1in}

    A sub-optimal updating algorithm reduces the connectivity of the proximity graph and hence the greedy searching cannot locate accurate nearest neighbors.
    In the online settings, the deletion operation is called regularly in the workload---the disadvantage of sub-optimal updating algorithms are magnified.
    Therefore, we delve into the design and implementation of the \text{DELETE-UPDATE-EDGES} algorithm in the following paper.

\section{DELETE-UPDATE-EDGES Algorithms}
    We investigate online proximity graph deletion algorithms in this section. We begin with straightforward solutions and then keep refining the update algorithm step by step.

\begin{figure}[h]
\centering
\includegraphics[width=6in]{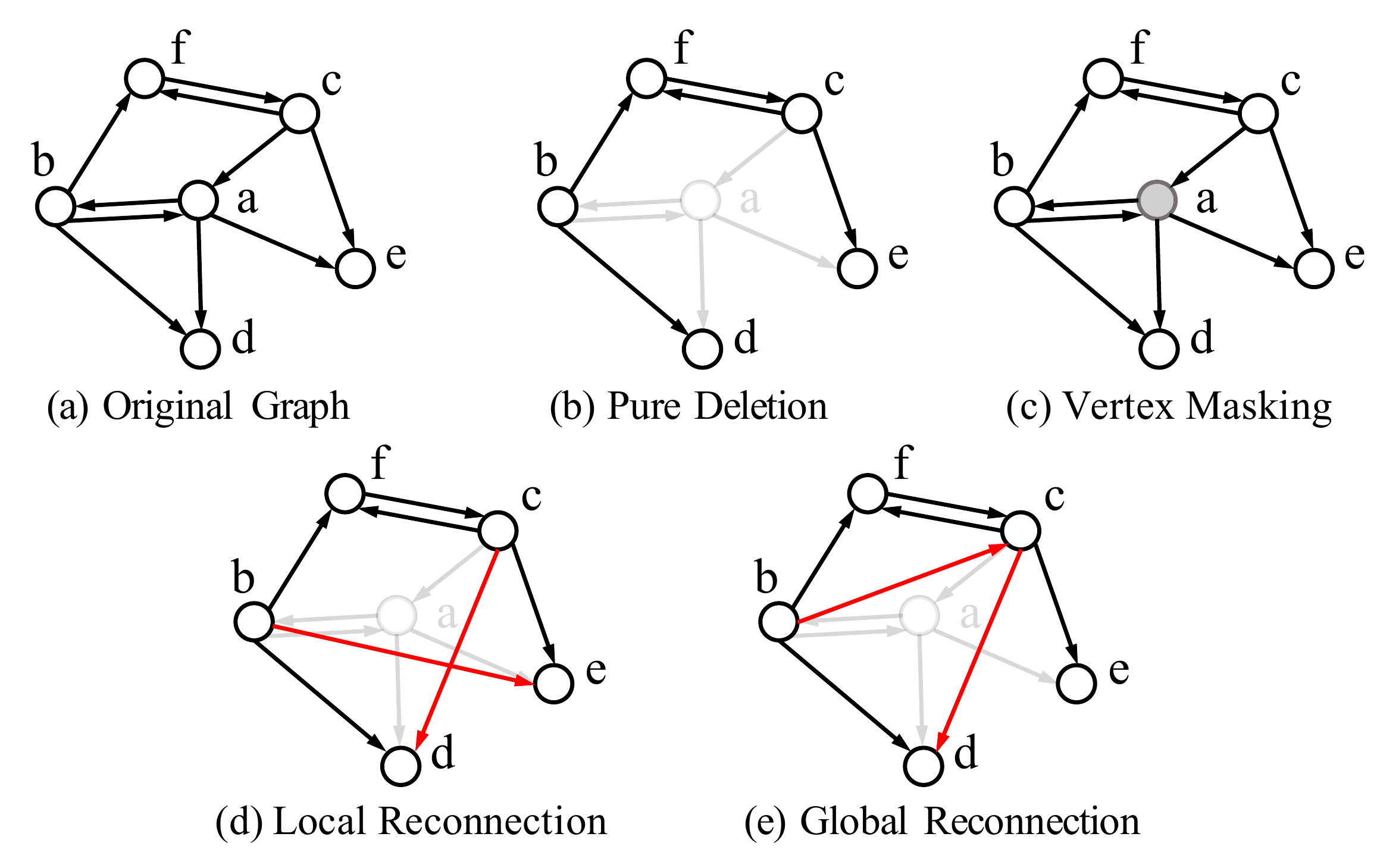}

\caption{Illustrations of graph updating algorithms.}
\label{fig:diagram}
\end{figure}

    \subsection{Pure Delete}
    The most straightforward delete operations for proximity graph update is to remove the vertex as well as its connections. Here we demonstrate our pure deletion algorithm in Algorithm~\ref{alg:pure}  and Figure~\ref{fig:diagram} (b). Given a deleted vertex $x_i$, we first obtain its out-neighbors and in-neighbors. For each out-neighbor $x_j$, we remove edge $(x_i,x_j)$ on proximity graph $G$ and $(x_j,x_i)$ on the reverse proximity graph $G'$. For each in-neighbor $x_k$, we remove edge $(x_k,x_i)$ on proximity graph $G$ and $(x_i,x_k)$ on the reverse proximity graph $G'$. Pure deletion breaks the connectivity of the proximity graph, which may lead to the fail of greedy search as the path towards the exact nearest neighbors are blocked.

   \subsection{Vertex Masking}
    To tackle the preserve the connectivity challenge in the \texttt{Pure Deletion} method, one straightforward solution is to mask the deleted vertices as ``deleted'' and push it into $Y$ in Algorithm~\ref{alg:greedys}. When we perform a greedy searching, the ``deleted'' vertices are still visited. However, we do not push the ``deleted'' vertices into the result priority queue---the visited ``deleted'' vertices are not counted as top-$K$ candidates. The connections of the vertices in the graph are unchanged. We illustrate this method in Figure~\ref{fig:diagram} (c).

    One advantage of the masking method is that the connectivity of the original proximity graph is preserved and the expired vertices are still useful for guiding the greedy search. However, the disadvantages are still obvious. First, after masking, the search space for each query is large and unnecessary. The visit to expired vertices slows down the greedy search speed. Second, as all expired vertices are still stored, space grows continuously as the stream performs, which may cause inevitable memory issues.

	\begin{algorithm}[t]
    \caption{PURE-DELETE$(x_i, G, G', f, d)$} \label{alg:pure}
    \begin{algorithmic}[1]
    \STATE\textbf{Input:} deleted vertex $x_i$, proximity graph $G$ and its reverse graph $G'$, measurement function $f$, maximum outgoing degree of graph $d$.
    \STATE $N'(x_i)\gets$ out-neighbors of $x_i$ in $G'$.
    \STATE $N(x_i)\gets$ out-neighbors of $x_i$ in $G$.
    \FOR{$x_j$ in $N(x_i)$}
    \STATE Remove $(x_i, x_j)$ in $G$ and $(x_j, x_i)$ in $G'$
    \ENDFOR
    \FOR{$x_k$ in $N'(x_i)$}
    \STATE Remove $(x_k, x_i)$ in $G$ and $(x_i, x_k)$ in $G'$
    \ENDFOR
    \STATE\textbf{Output:} $G$, $G'$.
    \end{algorithmic}
    \end{algorithm}

  \begin{algorithm}[t]
    \caption{LOCAL-RECONNECT$(x_i, G, G', f, d)$} \label{alg:reconnect_local}
    \begin{algorithmic}[1]
    \STATE\textbf{Input:} deleted vertex $x_i$, proximity graph $G$ and its reverse graph $G'$, measurement function $f$, maximum outgoing degree of graph $d$.
    \STATE $N'(x_i)\gets$ out-neighbors of $x_i$ in $G'$.
    \STATE $N(x_i)\gets$ out-neighbors of $x_i$ in $G$.
    \FOR{$x_j$ in $N'(x_i)$}
    \STATE $N(x_j)\gets$ out-neighbors of $x_j$ in $G$.
    \STATE $z \gets \text{\small SELECT-NEIGHBORS}(x_j, N(x_i), 1, N(x_j)\cup \{x_j\})$.
    \STATE Remove $(x_j,x_i)$ in $G$ and $(x_i,x_j)$ in $G'$
    \IF{$z!=\mathsf{null}$}
    \STATE Add edges $(x_j,z)$ to $G$ and $(z,x_{j})$ to $G'$
    \ENDIF
    \ENDFOR
    \STATE\textbf{Output:} $G$, $G'$.
    \end{algorithmic}
    \end{algorithm}

    \subsection{Local Reconnect}

    To remedy the connectivity of the proximity graph while reducing the useless visit in the search phase, another solution is to reset local connections between the in-neighbors and out-neighbors of deleted vertex. The local reconnect algorithm is presented in Algorithm~\ref{alg:reconnect_local}  and Figure~\ref{fig:diagram} (d). For each expired vertex $x_i$, we first acquire its in-neighbors $N'(x_i)$ from $G'$ and out-neighbors $N(x_i)$ from $G$. Then, for each in-neighbor $x_j$, we remove the edge from $x_j$ to $x_i$ in $G$ and $G'$. The last step is to select the most diverse vertex $z$ for $x_j$ from $N(x_i)$ and set out-going links from $x_j$ to $z$.

    The idea of local reconnect is to compensate an edge for the in-neighbors of deleted vertex. According to the nearest neighbor expansion, the out-neighbors of expired vertex $x_i$ may be the neighbors of $x_j\in N'(x_i)$. However, as the proximity graph is an approximation to real Delaunay Graph, out-neighbors of $x_i$ may not be $x_i$'s exact nearest neighbors. Therefore, the compensated edge may still not be good.

    \subsection{Global Reconnect}
    To compensate the in-neighbors of deleted vertices with better edges, we propose a global reconnection algorithm for proximity graph update. The algorithm is presented in Algorithm~\ref{alg:global} and Figure~\ref{fig:diagram} (e). For each expired vertex $x_i$, we first acquire its in-neighbors $N'(x_i)$ from $G'$ and out-neighbors $N(x_i)$ from $G$.  After that, for every element $x_j$ in $N'(x_i)$, we perform greedy search to obtain its nearest neighbors $C$ on graph $G$. Then we use \texttt{SELECT-NEIGHBORS} function to select the vertices set $N$ from $Z$ to fit the regulation for graph degree $d$. Finally, we remove all the previous connections of $x_j$ and reset edges by vertices in $N$.

    The general idea behind this deletion method is to treat each in-neighbor $x_j$ of the expired vertex as a new vertex and reset $x_j$'s connections by re-inserting it on proximity graph. In this way, $x_j$ has an opportunity to improve its connection by re-selecting vertices globally, Therefore, the proximity graph may become a better approximation to Delaunay Graph. The disadvantage is the updating time cost. however, it can be amortized by the number of query operations. In real-world recommendation systems or search engine, the number of queries(users) may be much larger than the number of data(Ads, items) and one query may occur in high frequency. Thus, this deletion method may outperform peers in online graph-based ANN search.

	\begin{algorithm}[t]
    \caption{GLOBAL-RECONNECT$(x_i, G, G', k, f,  d)$}\label{alg:global}
    \begin{algorithmic}[1]
    \STATE\textbf{Input:} deleted vertex $x_i$, proximity graph $G$ and its reverse graph $G'$, priority queue length $k$, measurement function $f$, out-neighbor degree threshold $d$.
    \STATE $N'(x_i)\gets$ out-neighbors of $x_i$ in $G'$.

    \FOR{$x_j$ in $N'(x_i)$}
    \STATE $C \gets \text{\small GREEDY-SEARCH}(x_j, G,k, f)$.
    \STATE $N \gets \text{\small SELECT-NEIGHBORS}(x_j, C, d,\{x_i\})$.
    \STATE Remove $N(x_j)$ in $G$ and $N'(x_j)$ in $G'$
    \FOR{$z\in N$}
    \STATE Add edges $(x_j,z)$ to $G$ and $(z,x_j)$ to $G'$.
    \ENDFOR
    \ENDFOR
    \STATE\textbf{Output:} $G$, $G'$.
    \end{algorithmic}
    \end{algorithm}

    \section{Experiments}\label{sec:experiments}
    The objective of the experimental evaluation is to investigate the performance of our update algorithms on real datasets.
    The performance includes both the quality and efficiency of the retrieving top-$K$ candidates and the execution time for indexing and searching.
    Specifically, we target to answer the following questions:

    \begin{itemize}[leftmargin=*]
      \setlength\itemsep{0.05em}
    \item Does the proposed global reconnect algorithm improve the performance of the proximity graph? How does it compare with other graph update algorithms?

    \item Does the proposed global reconnect algorithm robust to different update patterns? How is the proposed algorithm and baselines perform in different update patterns?

    \item Does the proposed global reconnect algorithm reduce the total execution time for insertion, deletion, and query?

    \end{itemize}

\textbf{Implementation.}
    We implement the update algorithms as a \texttt{C++11} prototype and compile the code with g++-5.4.0 and ``O3'' optimization. The implementation in~\citet{zhao2020song} is employed as the proximity graph searching algorithm. The code contains special functions to harness detailed profiling data.

    \textbf{Hardware System.}
    We execute the experiments on a single node server. The server has one Intel(R) Core(TM) i7-5960X CPU @ 3.00GHz (64 bit). It has 8 cores 16 threads, and 32 GB memory. Ubuntu 16.04.4 $64$-bit is the operating system.

	\textbf{Retrieval Recall}.
	In this paper, we use recall to measure the quality of search, which is defined as the ratio of retrieved correct items over the total correct items. The increase in recall relates to higher quality.

    \textbf{Data.}
    We use $4$ real ANN benchmark datasets for experiments: SIFT~\citep{jegou2011searching}, GloVe200~\citep{pennington2014glove}, NYTimes~\citep{Dua:2019} and GIST~\citep{sandhawalia2010searching}. The dimension of vectors in these datasets are 128,200,256 and 960. The NYTimes contains $280,000$ vectors while others contain $1,000,000$ vectors. We have a diverge distribution of the datasets---GloVe200 and NYTimes are more skewed compare to others.

	\textbf{Workload.}
    For each of the 4 ANN datasets, we build 10 step workloads. Given the base set, each step of the workload removes a set with 10,000 vectors and then digests an set with 10,000 vectors. After that, a query set of 10,000 vectors are fed for top-K ANN search. In this case, the size for base set in NYTimes is 180,000 while other sets are 900,000.

    We consider two update patterns in the experiments:
    (a) Random updates: we permute the whole data set before partition. Each vector of base, delete, insert and query set is randomly selected from the original dataset.
    (b) Cluster updates: we perform 10 class K-means clustering on the whole dataset. Then we place the clusters in a sequence and build base, delete and insert sets in the order of this sequence. Therefore, each step we remove several clusters of vectors on graph. And then insert several clusters.

    \textbf{Methods.} The 4 proposed algorithms are abbreviated as PURE, MASK, LOCAL and GLOBAL. We also include ReBuild, which reconstructs the whole graph in each update batch~before~query.

\newpage

\begin{figure}[h!]

\mbox{\hspace{-0.1in}
\includegraphics[width=2.3in]{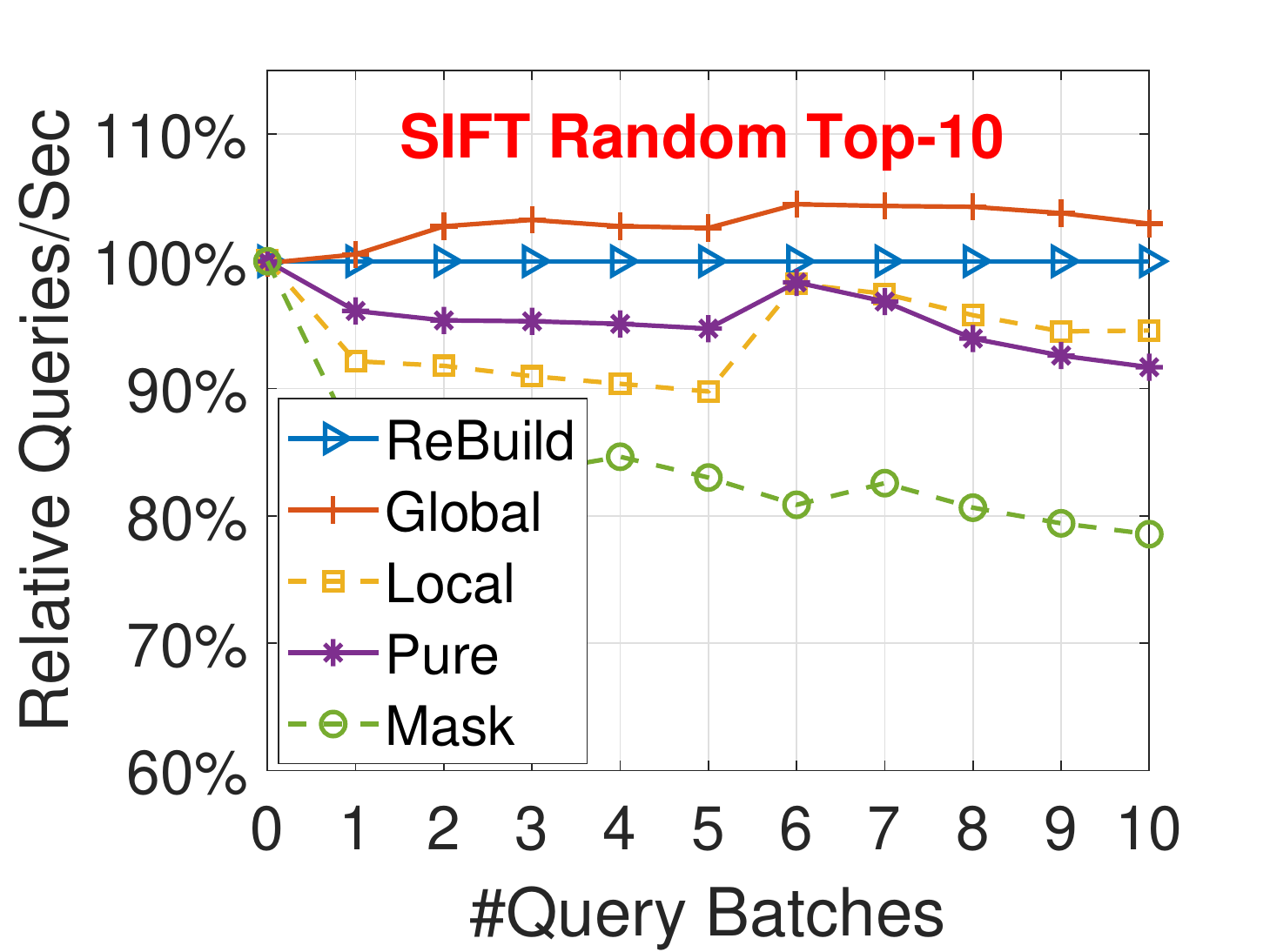}\hspace{-0.1in}
\includegraphics[width=2.3in]{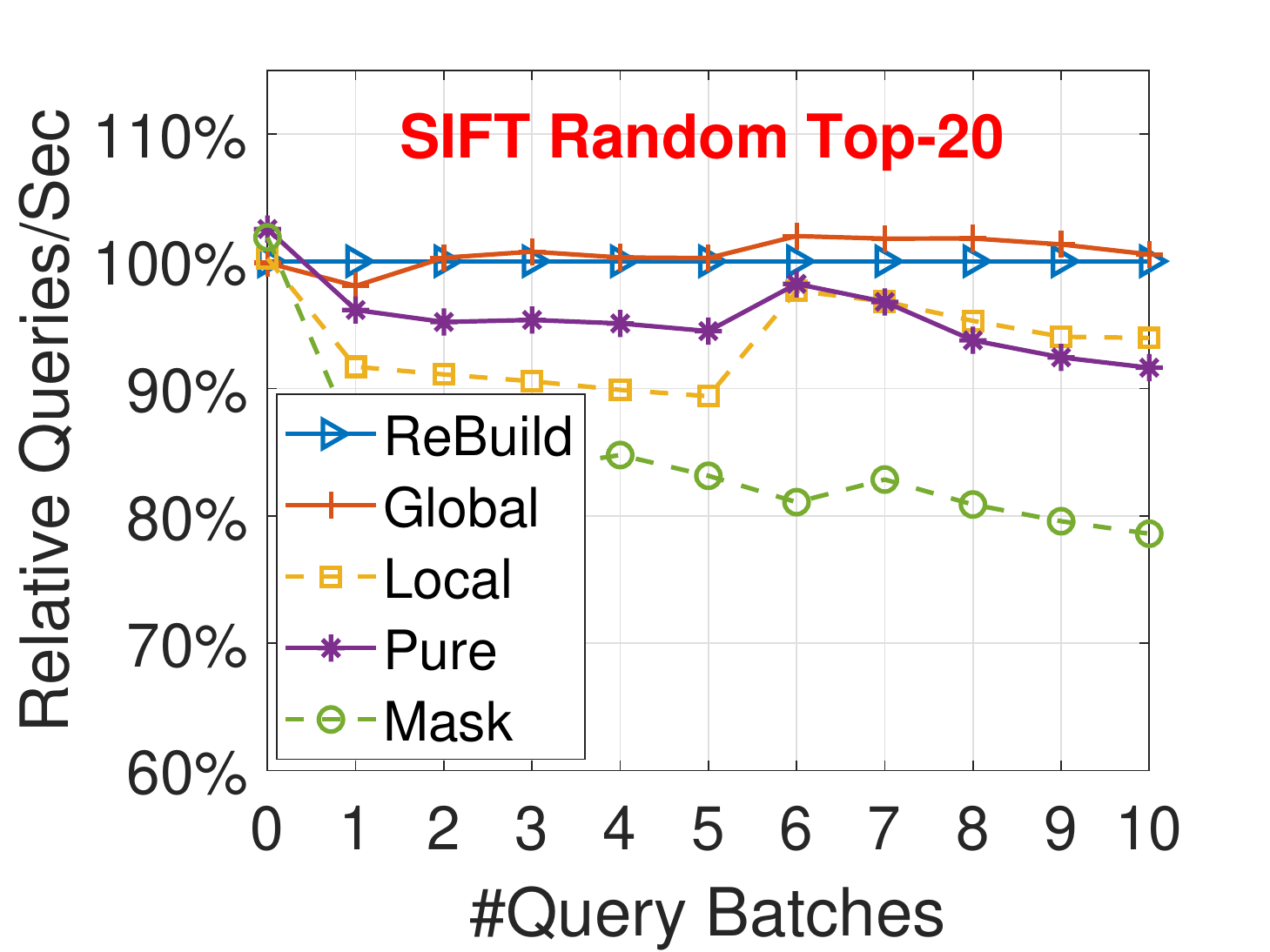}\hspace{-0.1in}
\includegraphics[width=2.3in]{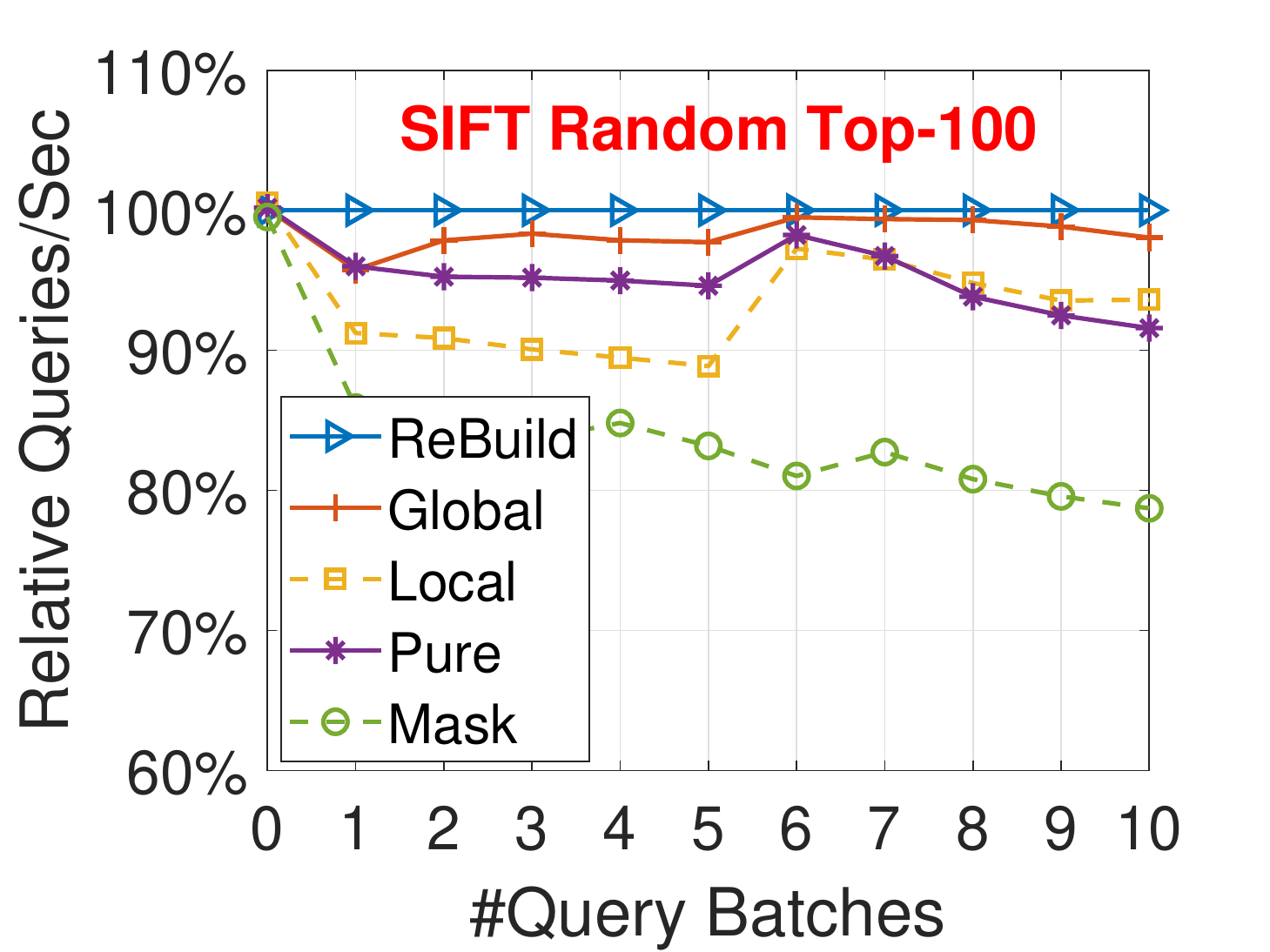}
}

\mbox{\hspace{-0.1in}
    \includegraphics[width=2.3in]{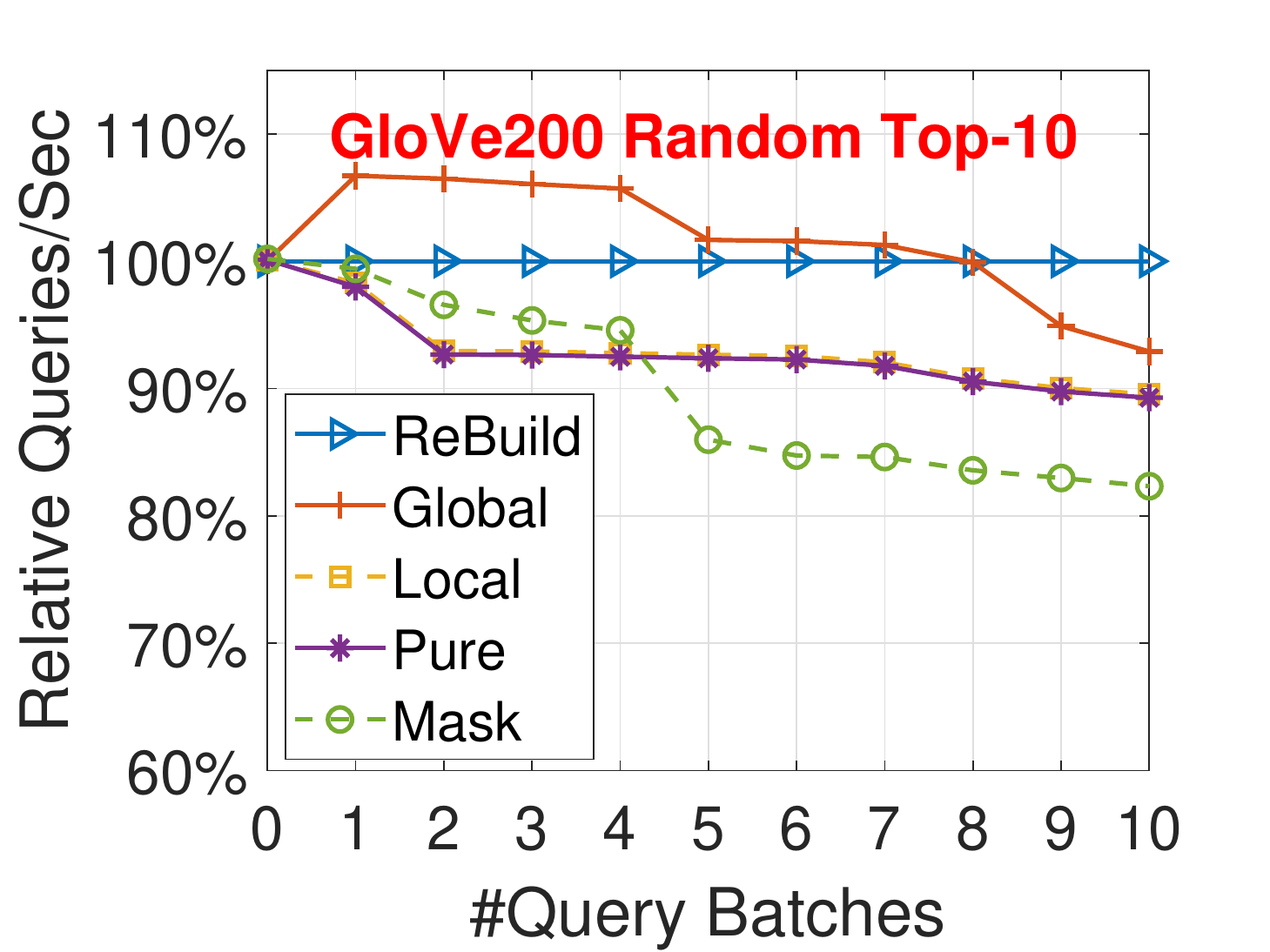}\hspace{-0.1in}
    \includegraphics[width=2.3in]{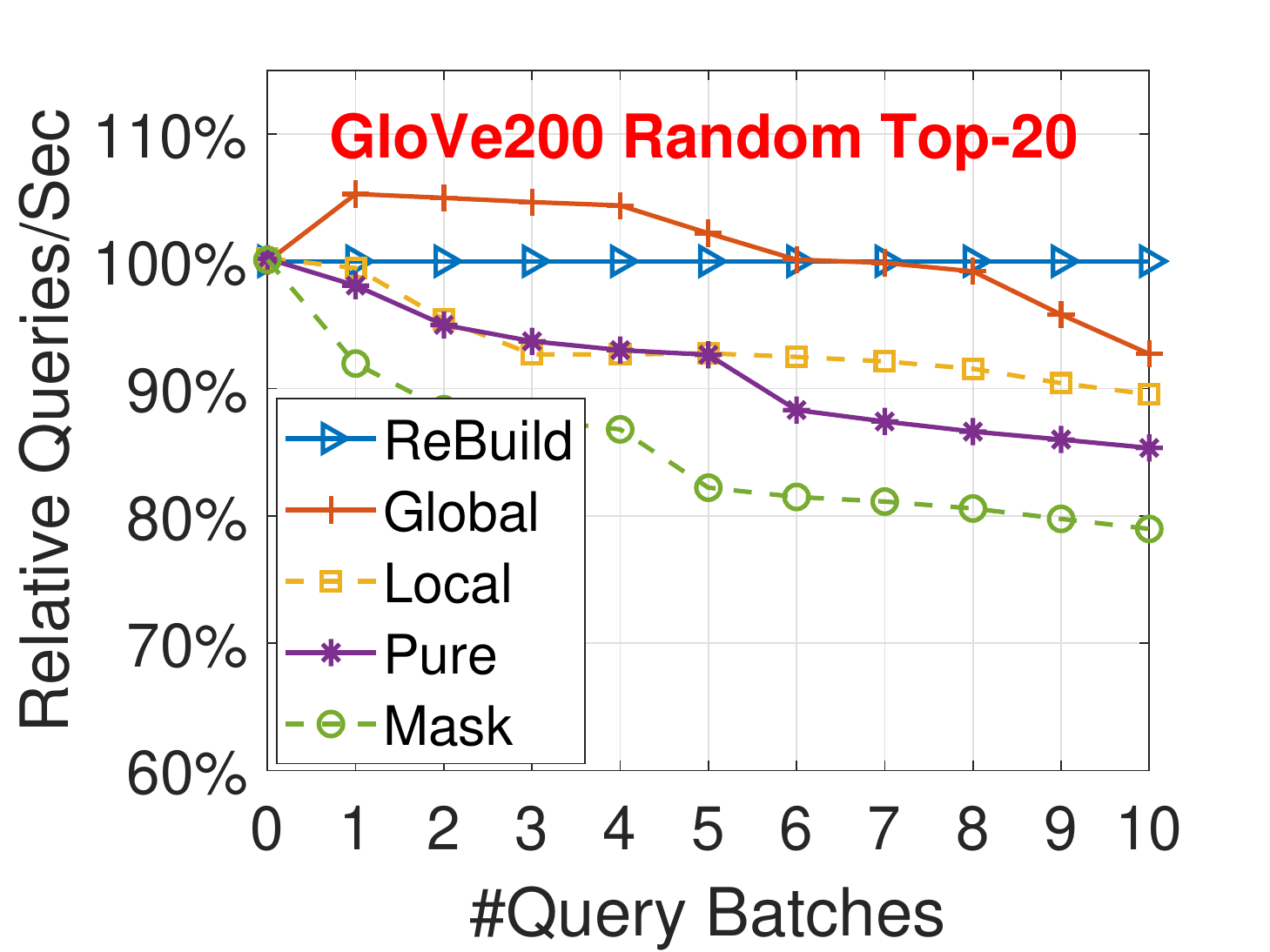}\hspace{-0.1in}
    \includegraphics[width=2.3in]{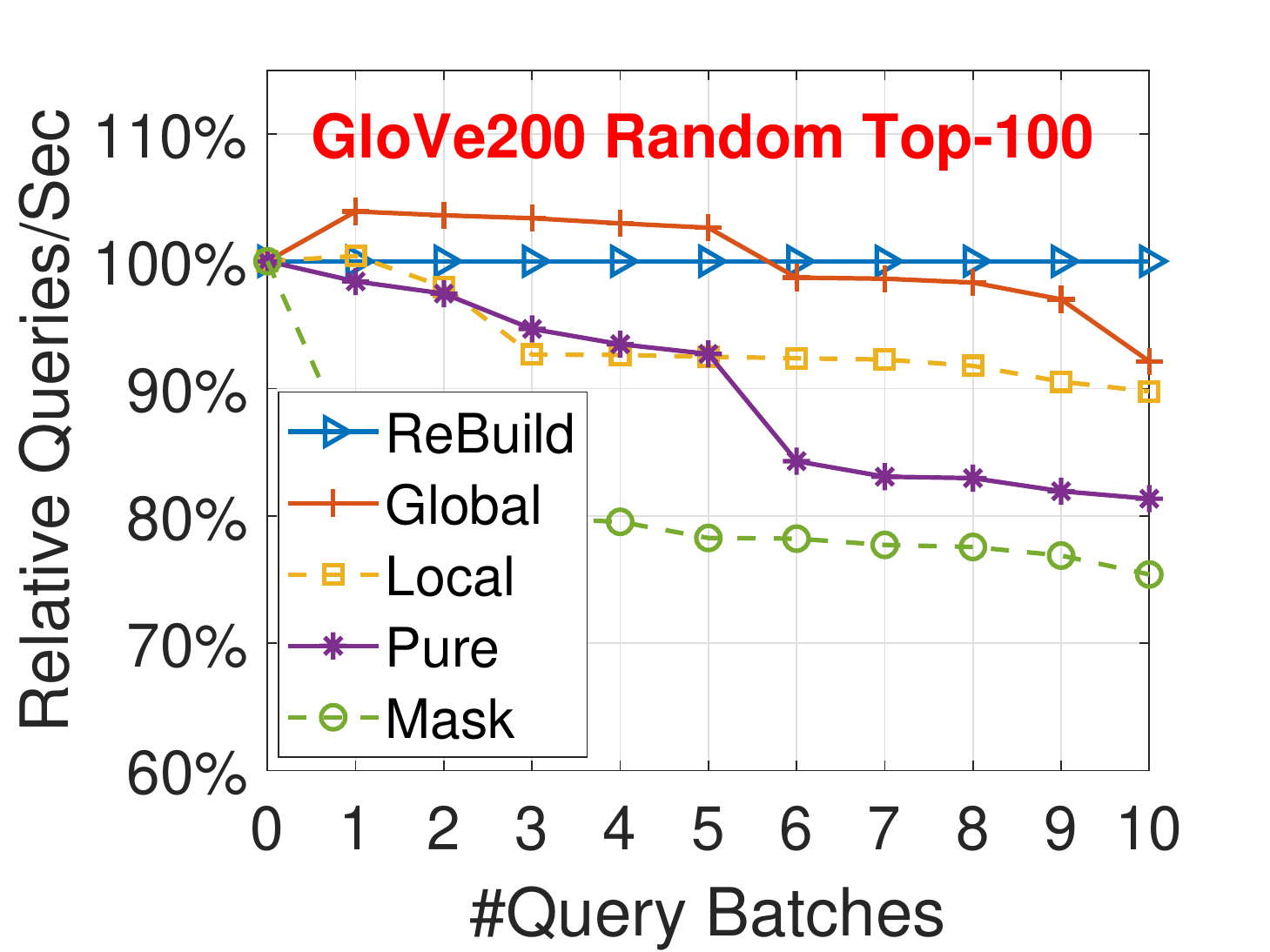}
}

\mbox{\hspace{-0.1in}
   \includegraphics[width=2.3in]{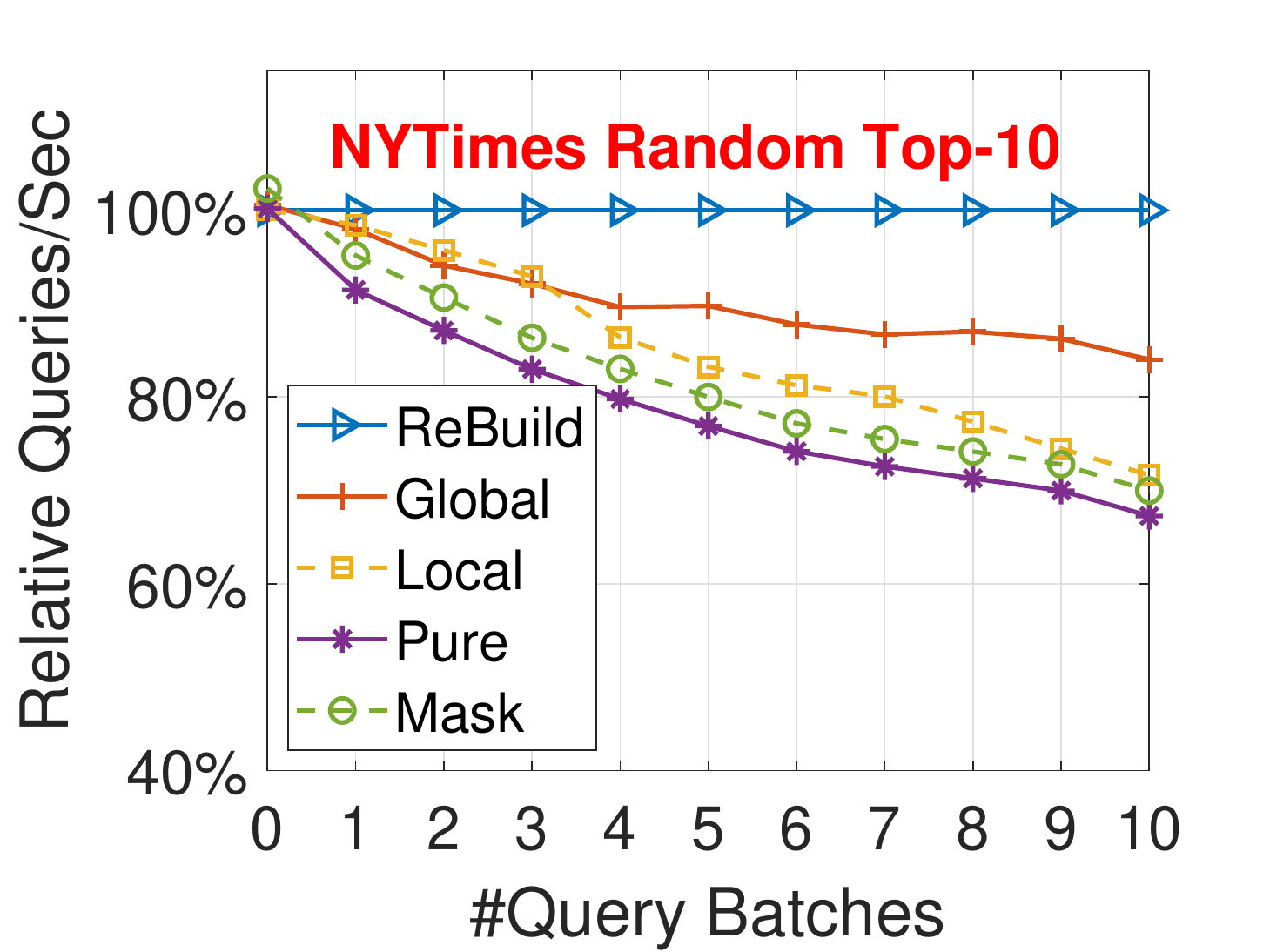}\hspace{-0.1in}
   \includegraphics[width=2.3in]{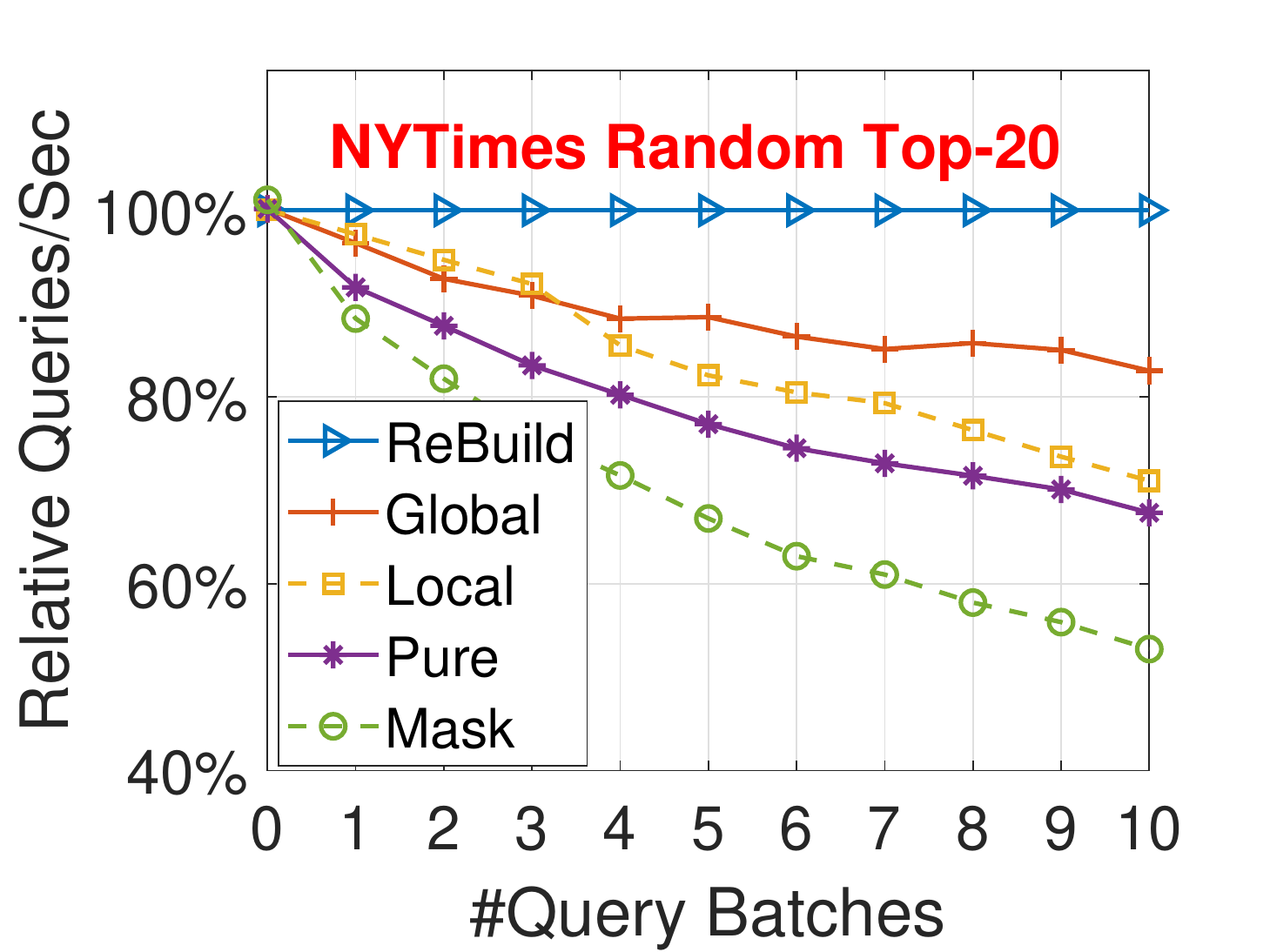}\hspace{-0.1in}
   \includegraphics[width=2.3in]{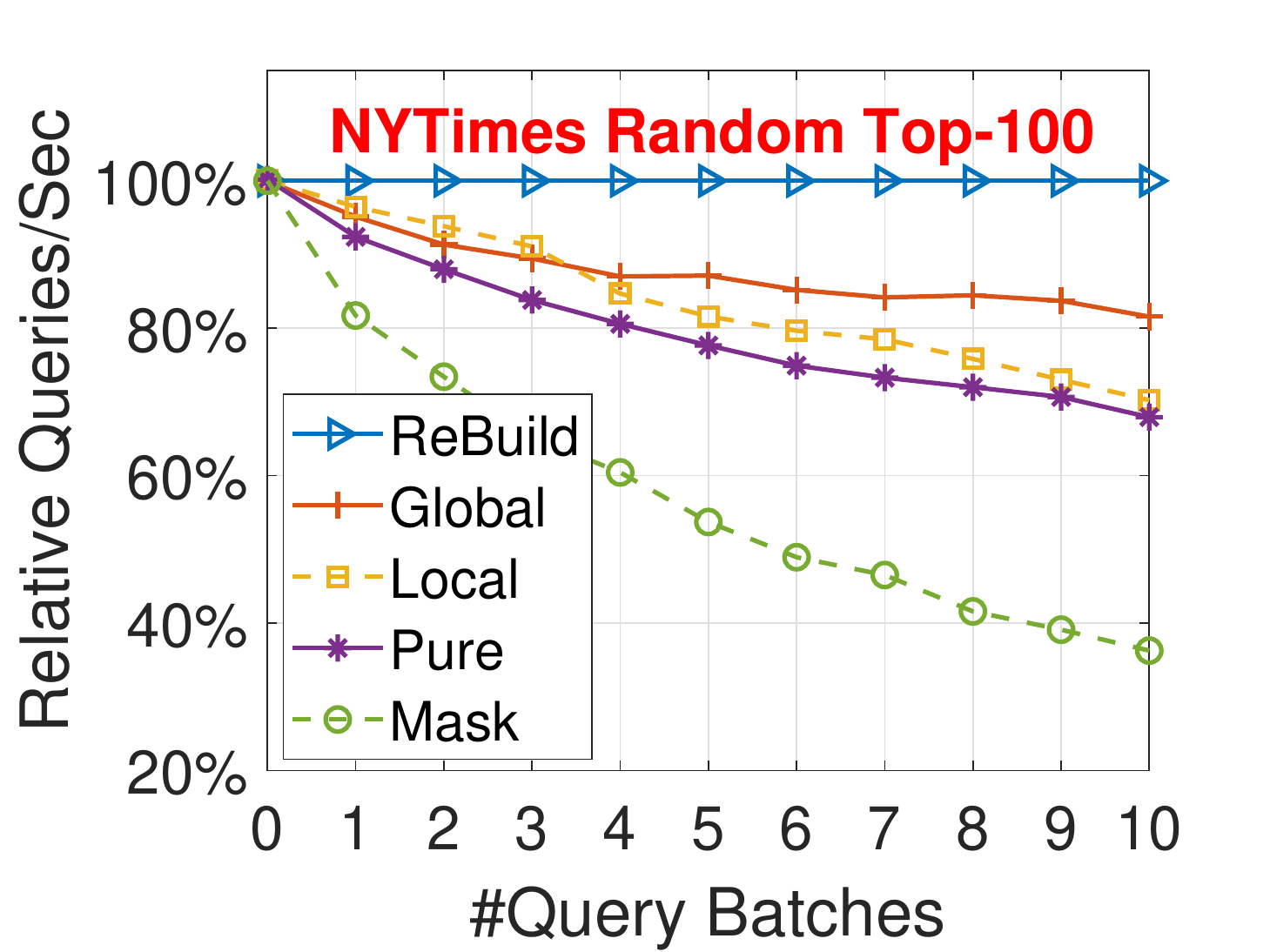}
}

\mbox{\hspace{-0.1in}
   \includegraphics[width=2.3in]{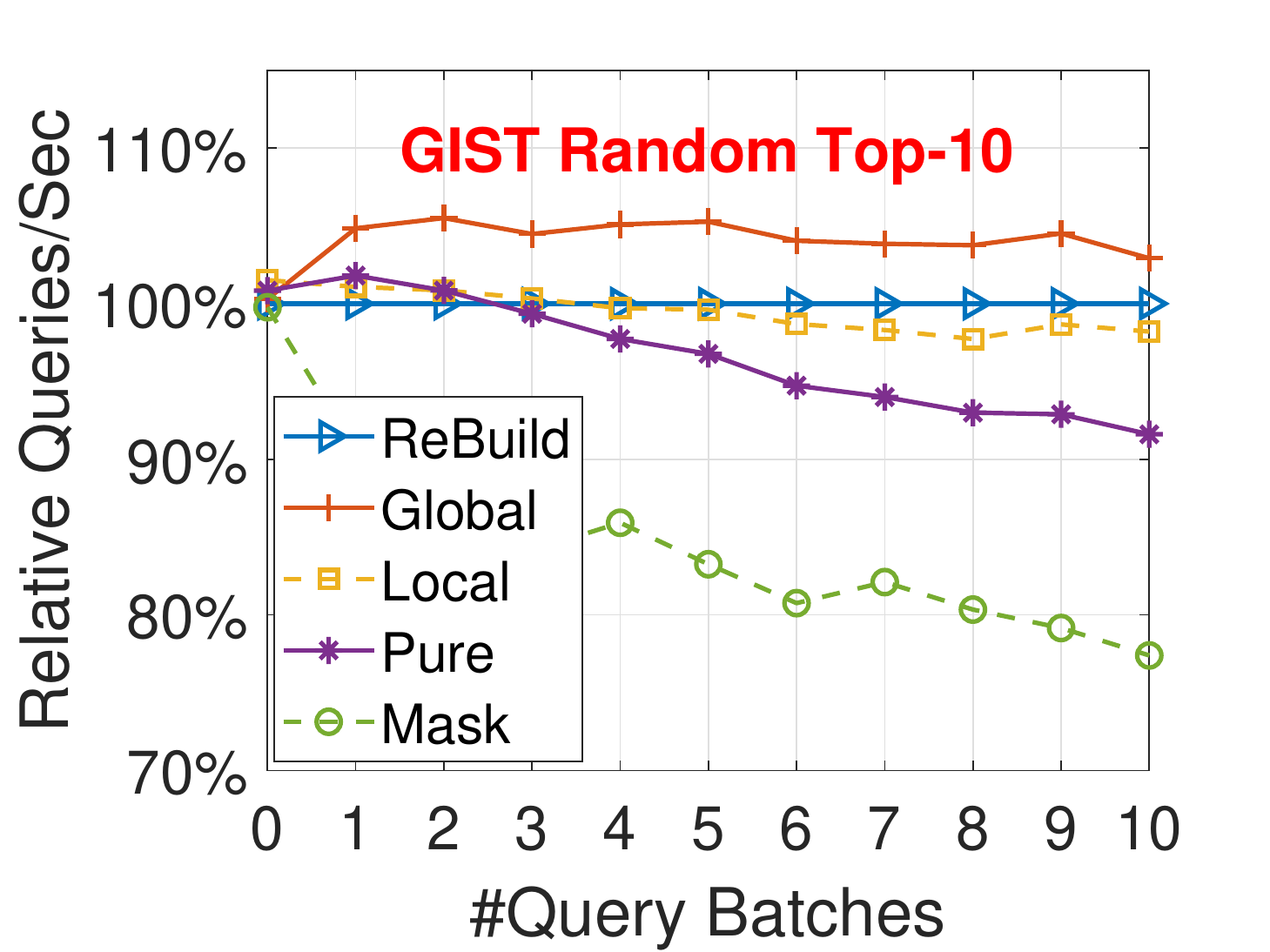}\hspace{-0.1in}
   \includegraphics[width=2.3in]{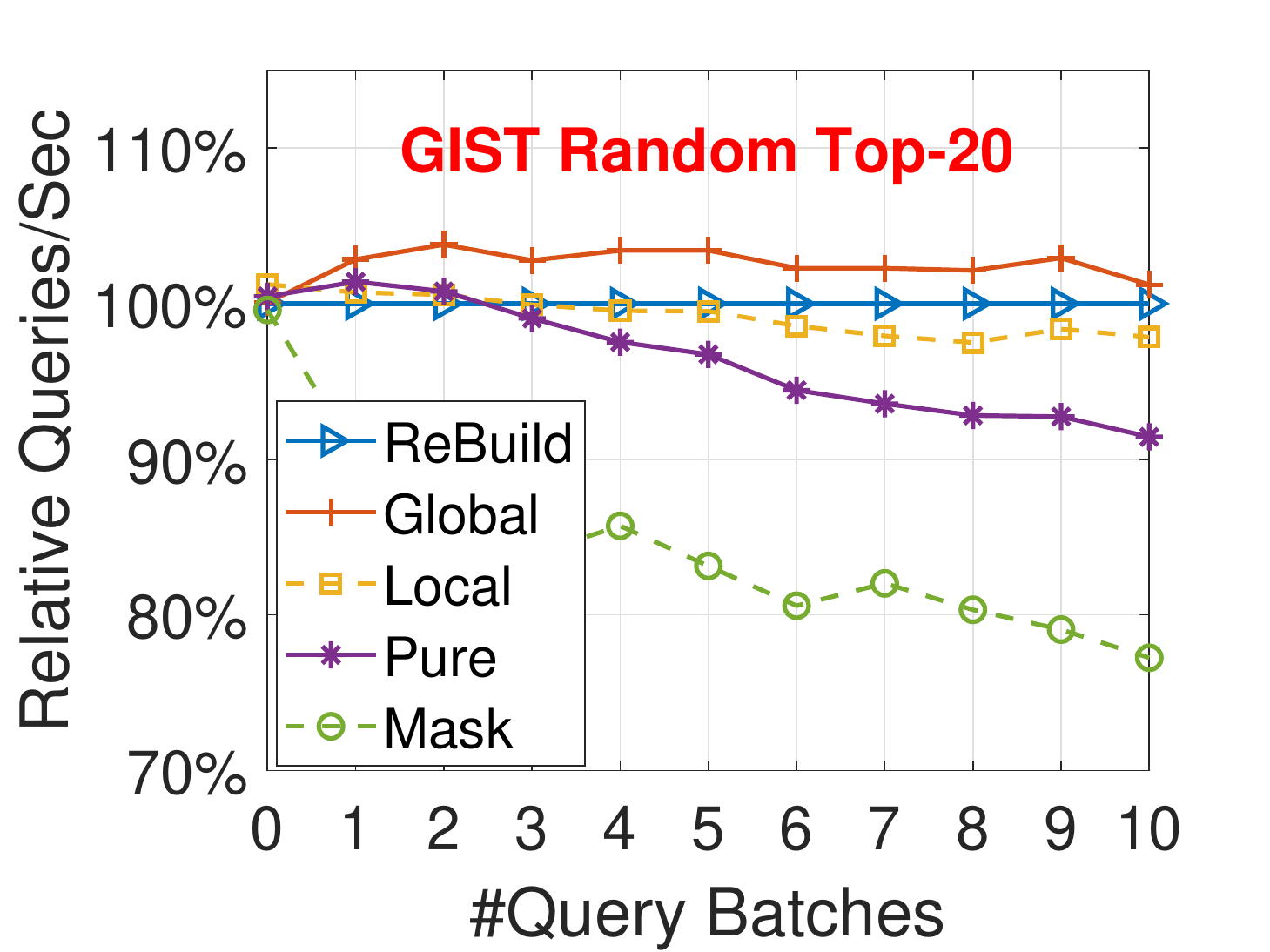}\hspace{-0.1in}
   \includegraphics[width=2.3in]{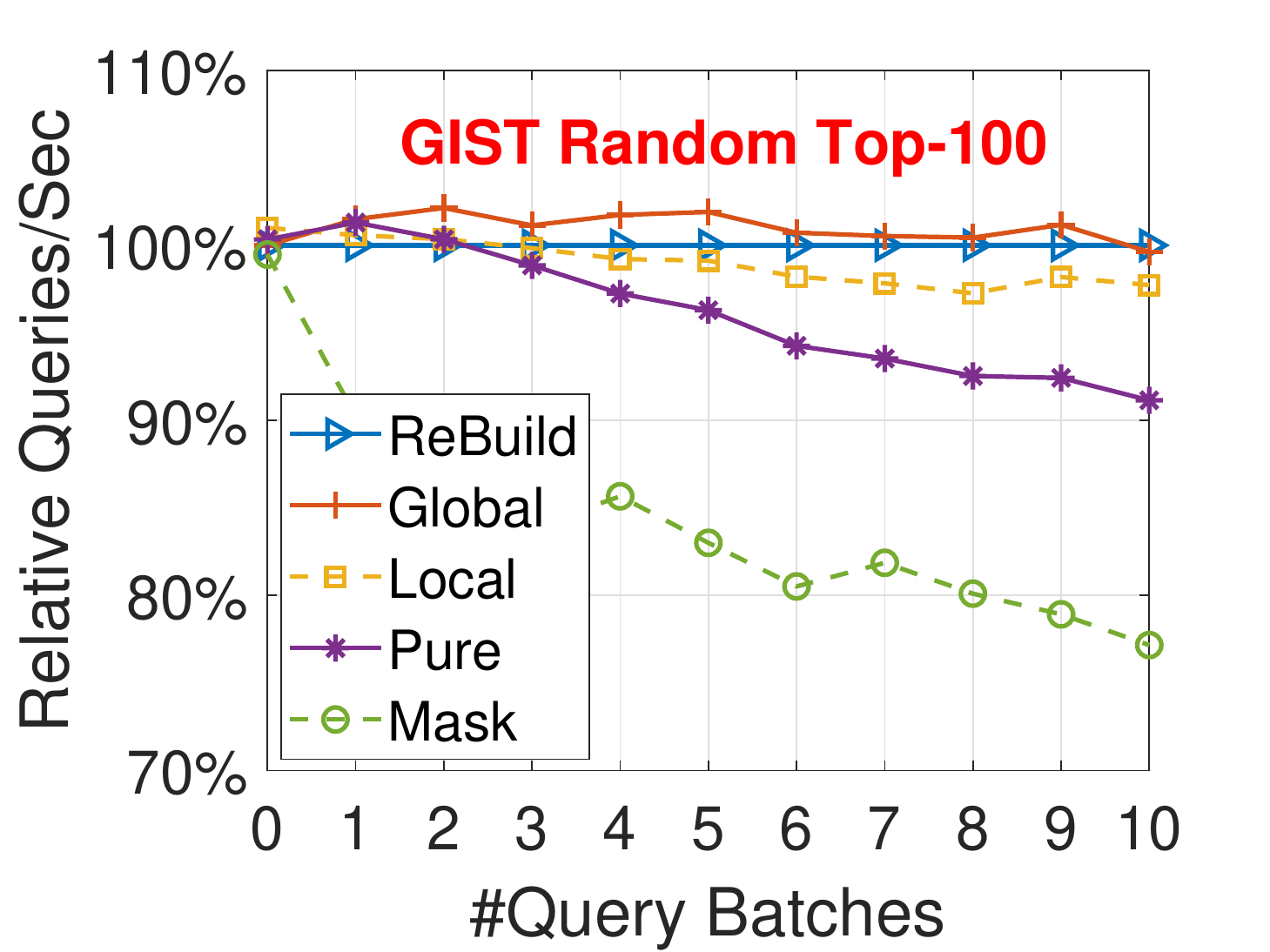}
}

\vspace{-0.2in}

\caption{Relative queries per second to obtain 0.8 recall in each batch. The update pattern for each batch is random updates.}\label{exp:time}
\label{fig:qps_random}\vspace{-0.2in}
\end{figure}

    \subsection{Query Time}
    In this section, we present the query efficiency for our proposed algorithms and baseline. In Figure~\ref{fig:qps_random} and Figure~\ref{fig:cluster_qps}, we plot relative queries processed per second (QPS) versus the query batches of workloads from 3 ANN datasets. The queries processed per second (QPS) versus the query batches of workloads of GIST dataset is also presented at Figure~\ref{fig:cluster_qps}. In each workload, we follow the pattern described above. Here the relative QPS is the result of real QPS divided by the QPS of ReBuild algorithm in 0.8 Top-10, Top-20 or Top-100 recall.

\newpage

\begin{figure}[h!]
\begin{center}
\mbox{\hspace{-0.1in}
    \includegraphics[width=2.3in]{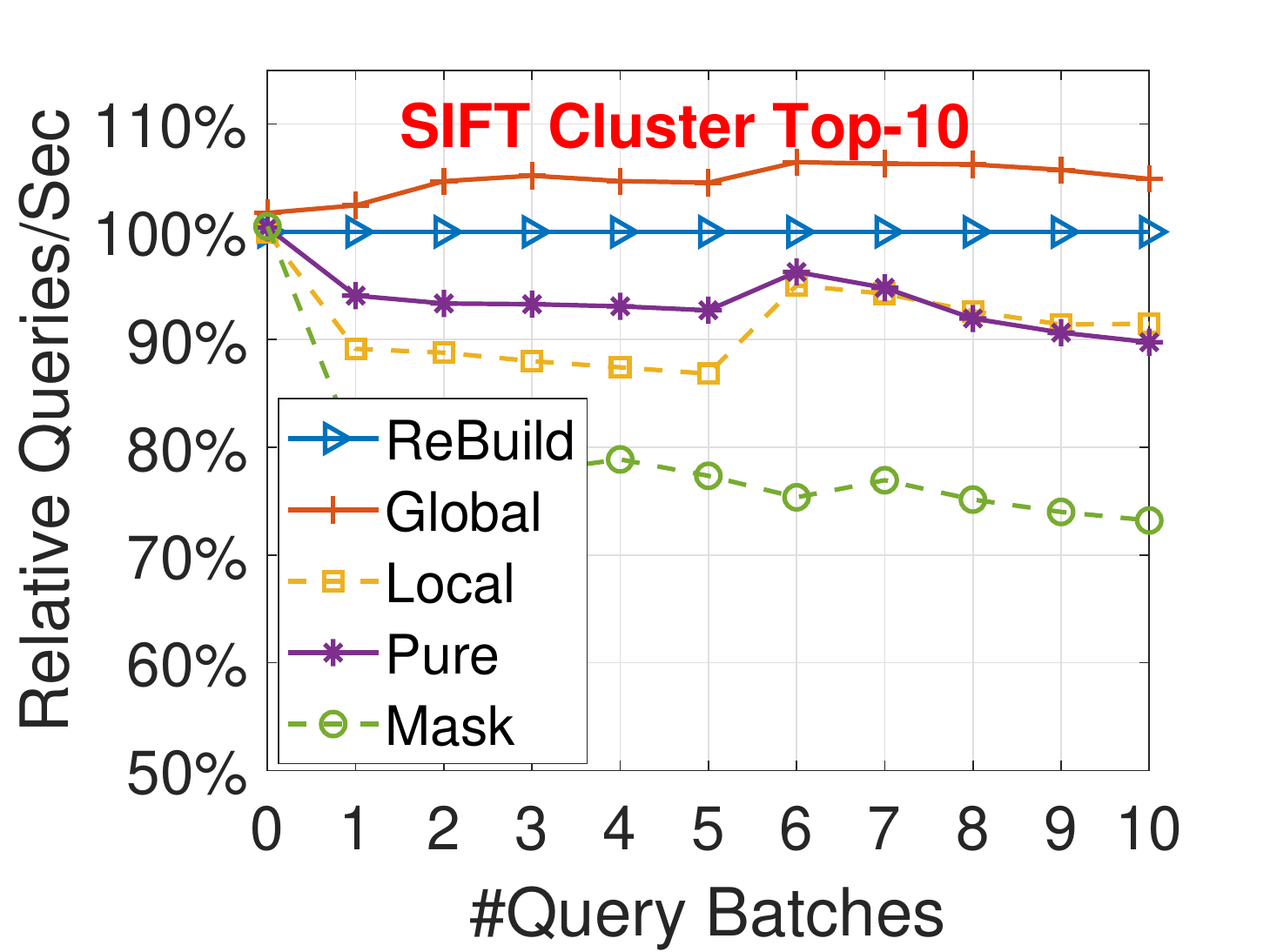}\hspace{-0.1in}
    \includegraphics[width=2.3in]{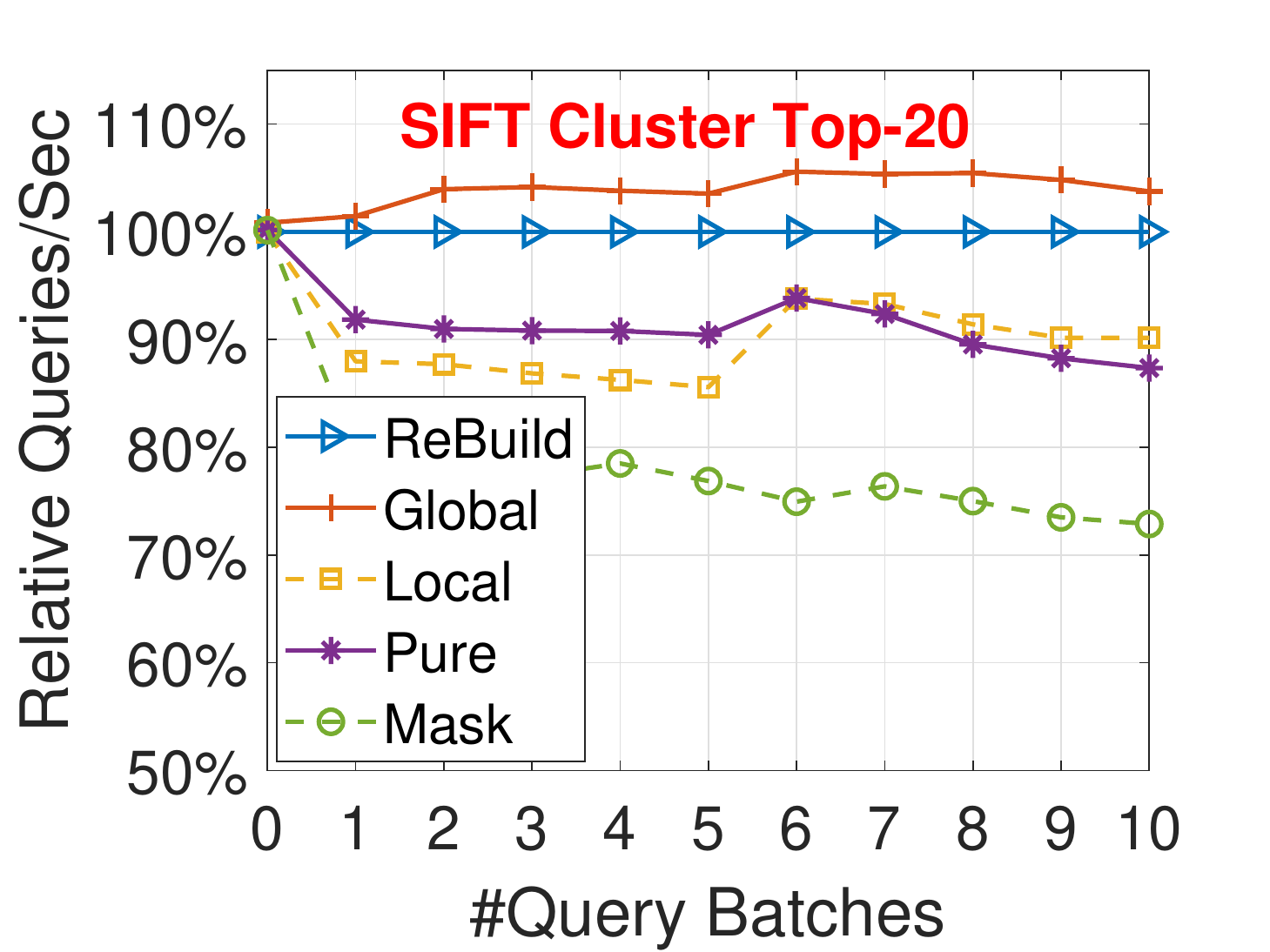}\hspace{-0.1in}
    \includegraphics[width=2.3in]{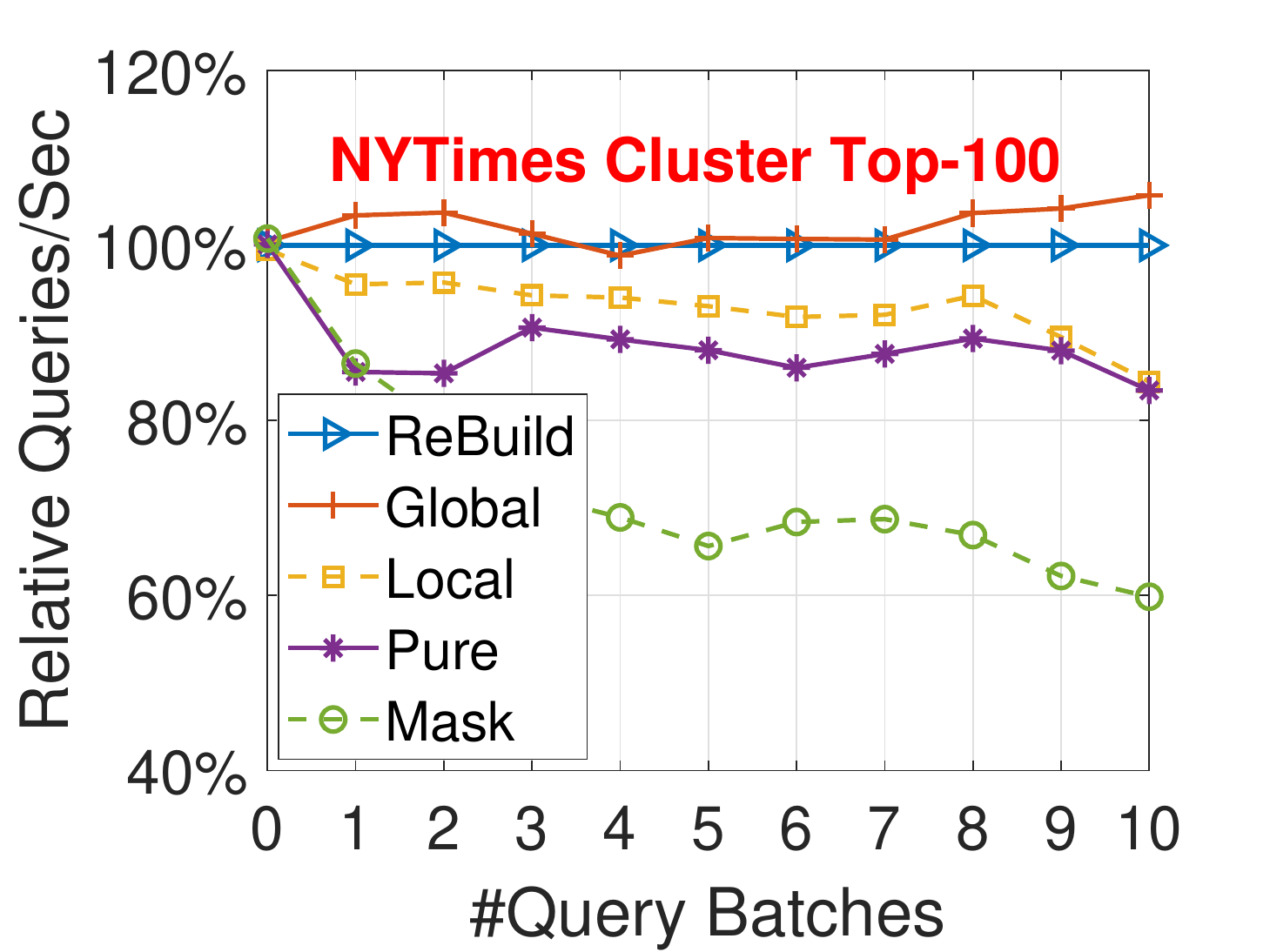}
}
\mbox{\hspace{-0.1in}
    \includegraphics[width=2.3in]{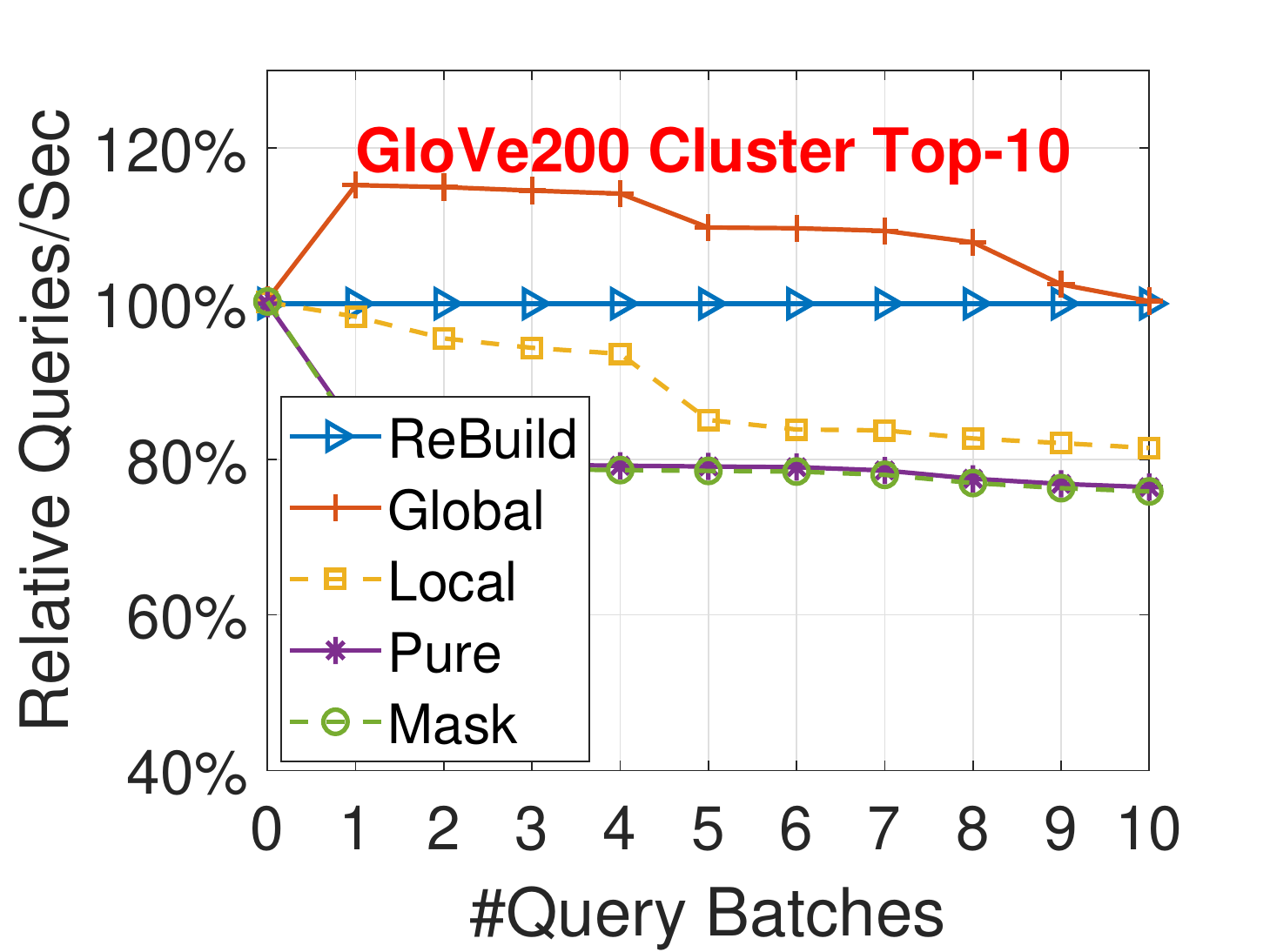}\hspace{-0.1in}
    \includegraphics[width=2.3in]{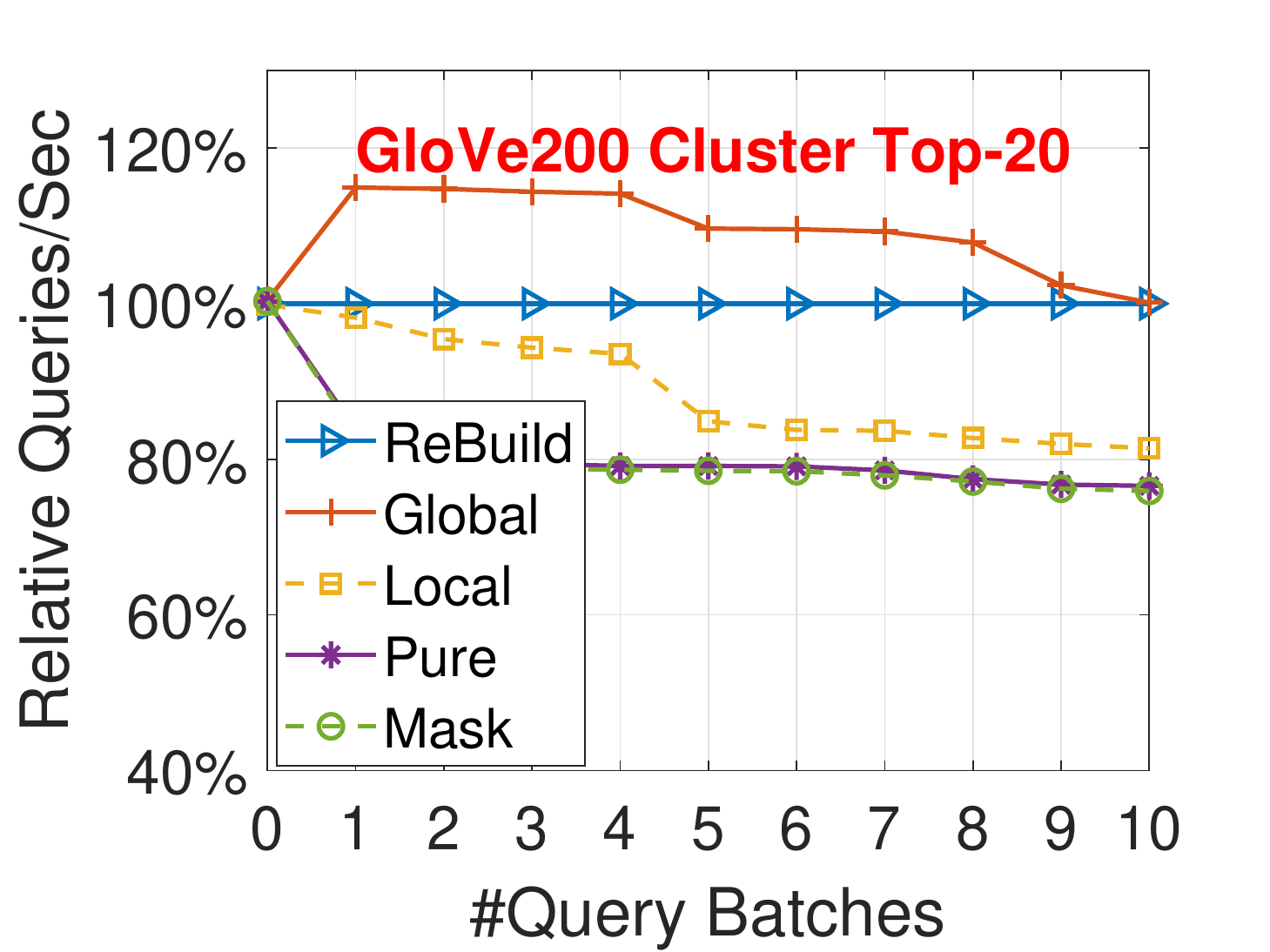}\hspace{-0.1in}
    \includegraphics[width=2.3in]{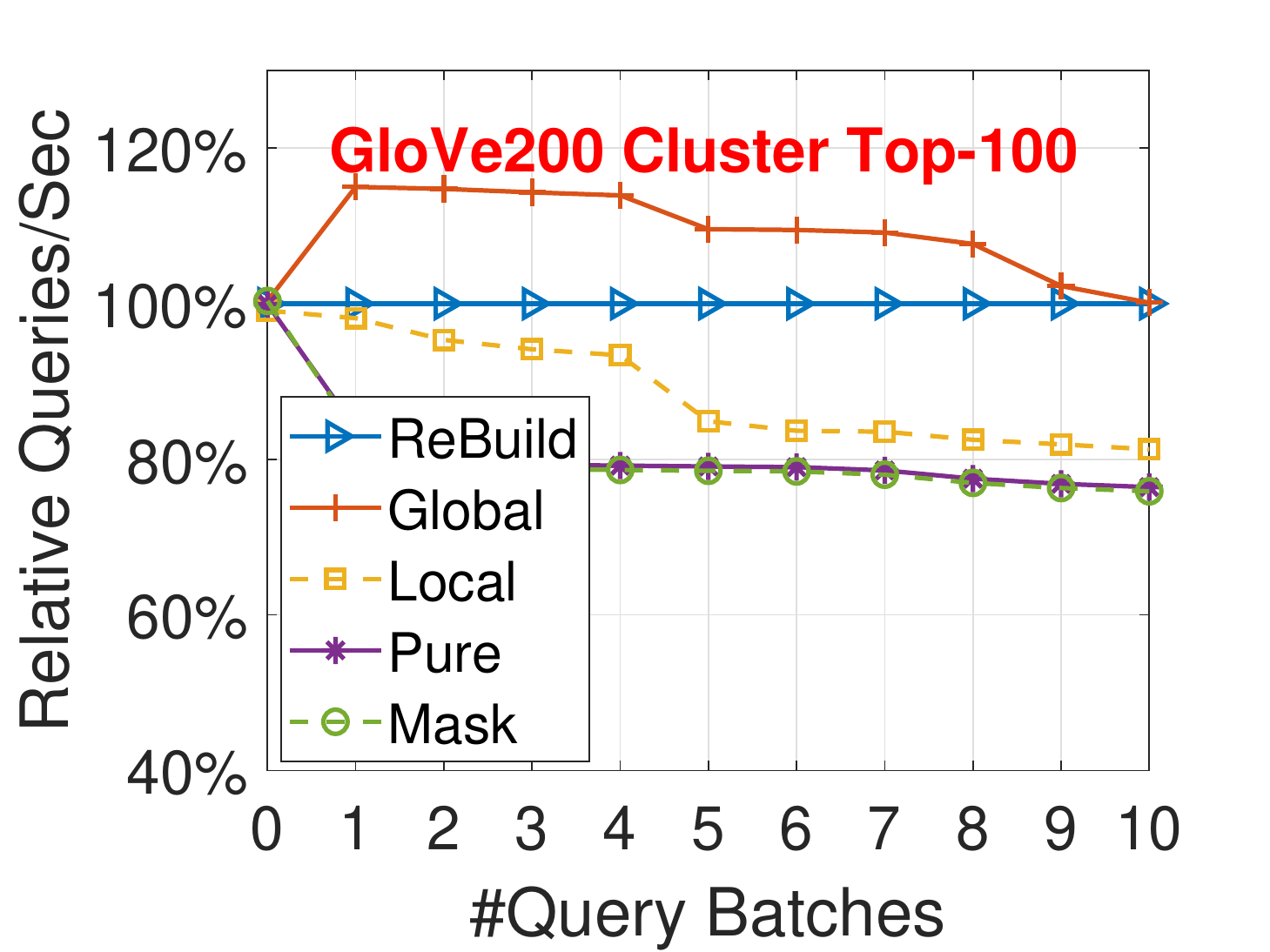}
}

\mbox{\hspace{-0.1in}
        \includegraphics[width=2.3in]{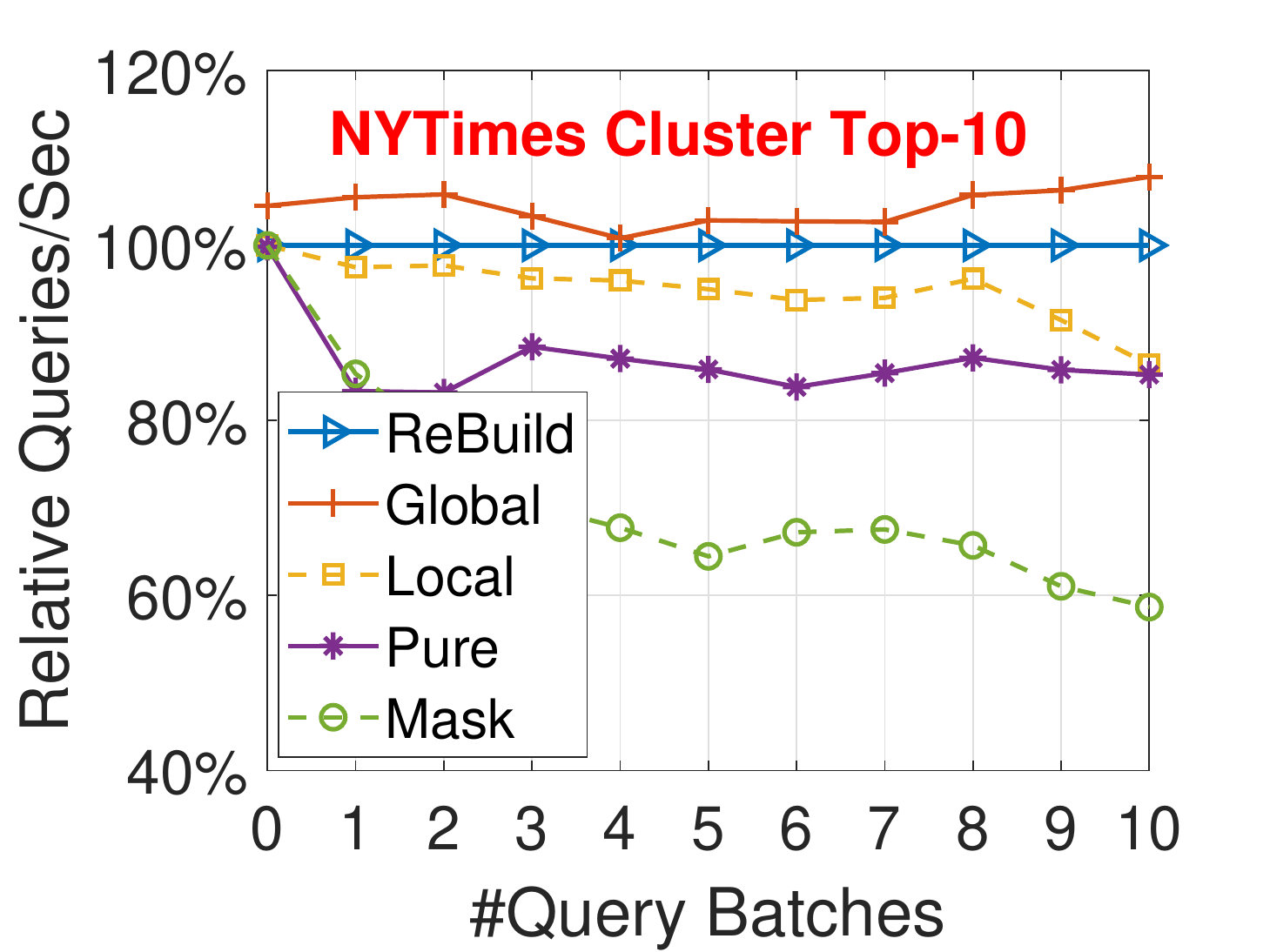}\hspace{-0.1in}
        \includegraphics[width=2.3in]{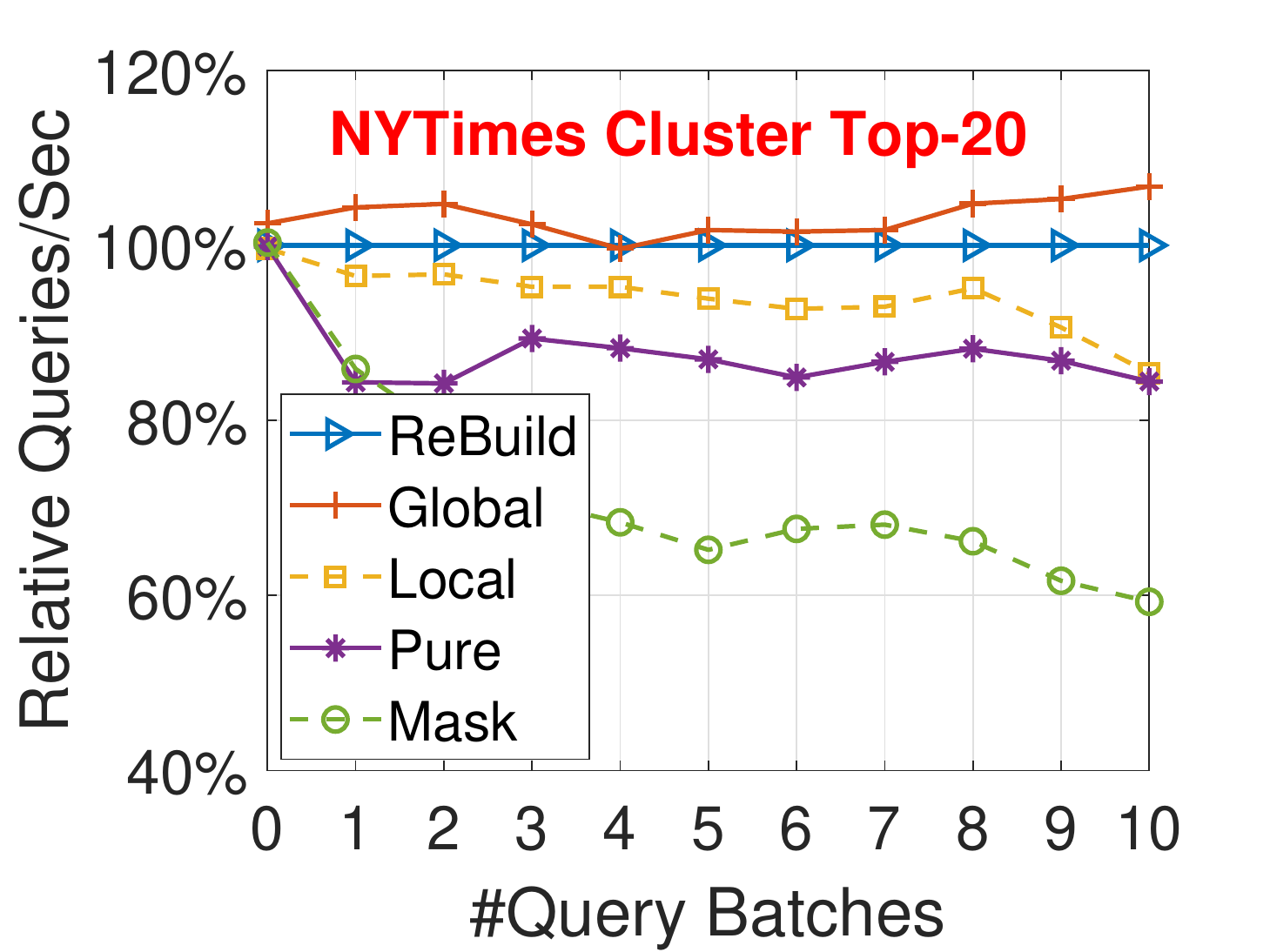}\hspace{-0.1in}
        \includegraphics[width=2.3in]{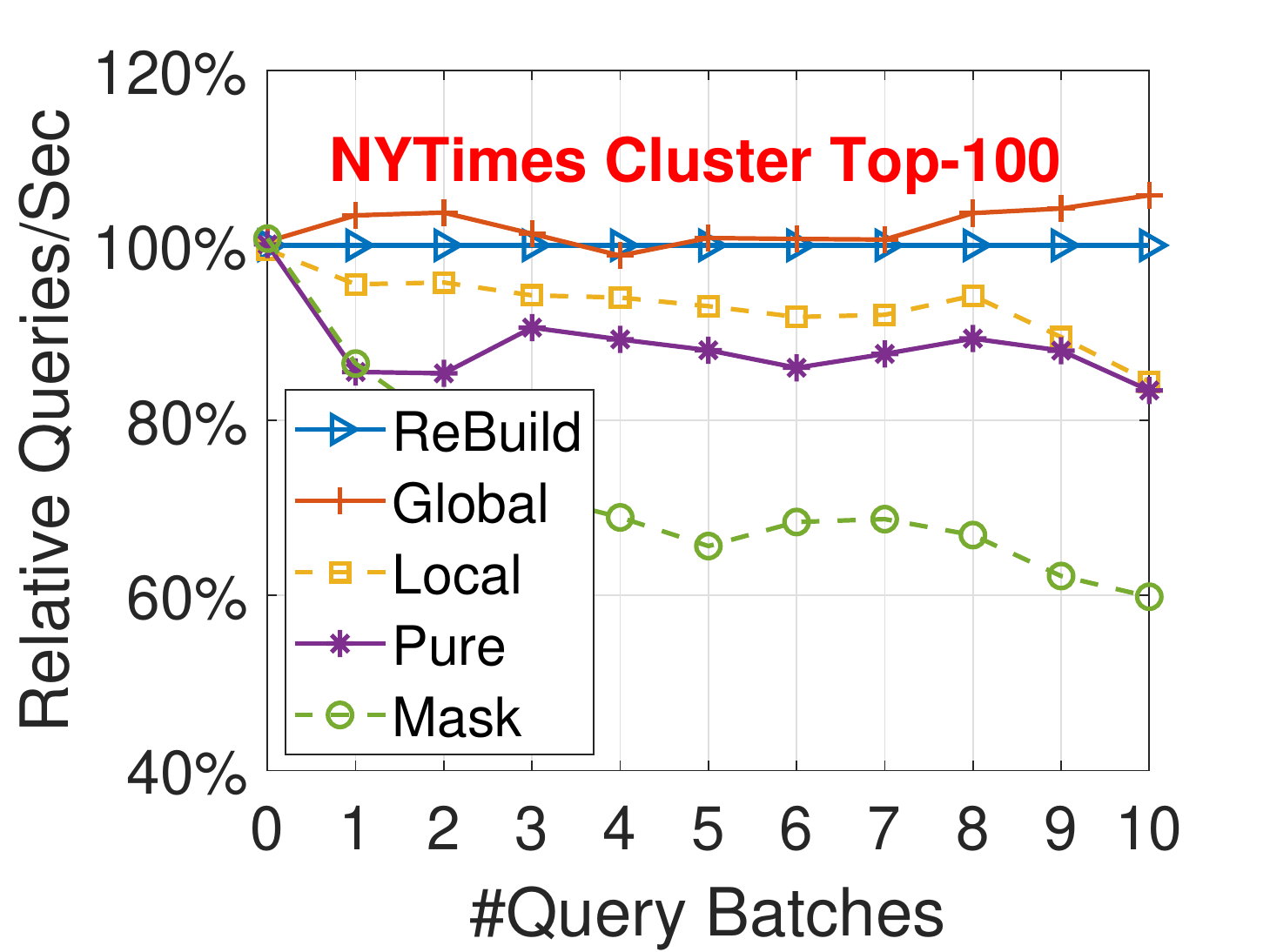}
}

\mbox{\hspace{-0.1in}
        \includegraphics[width=2.3in]{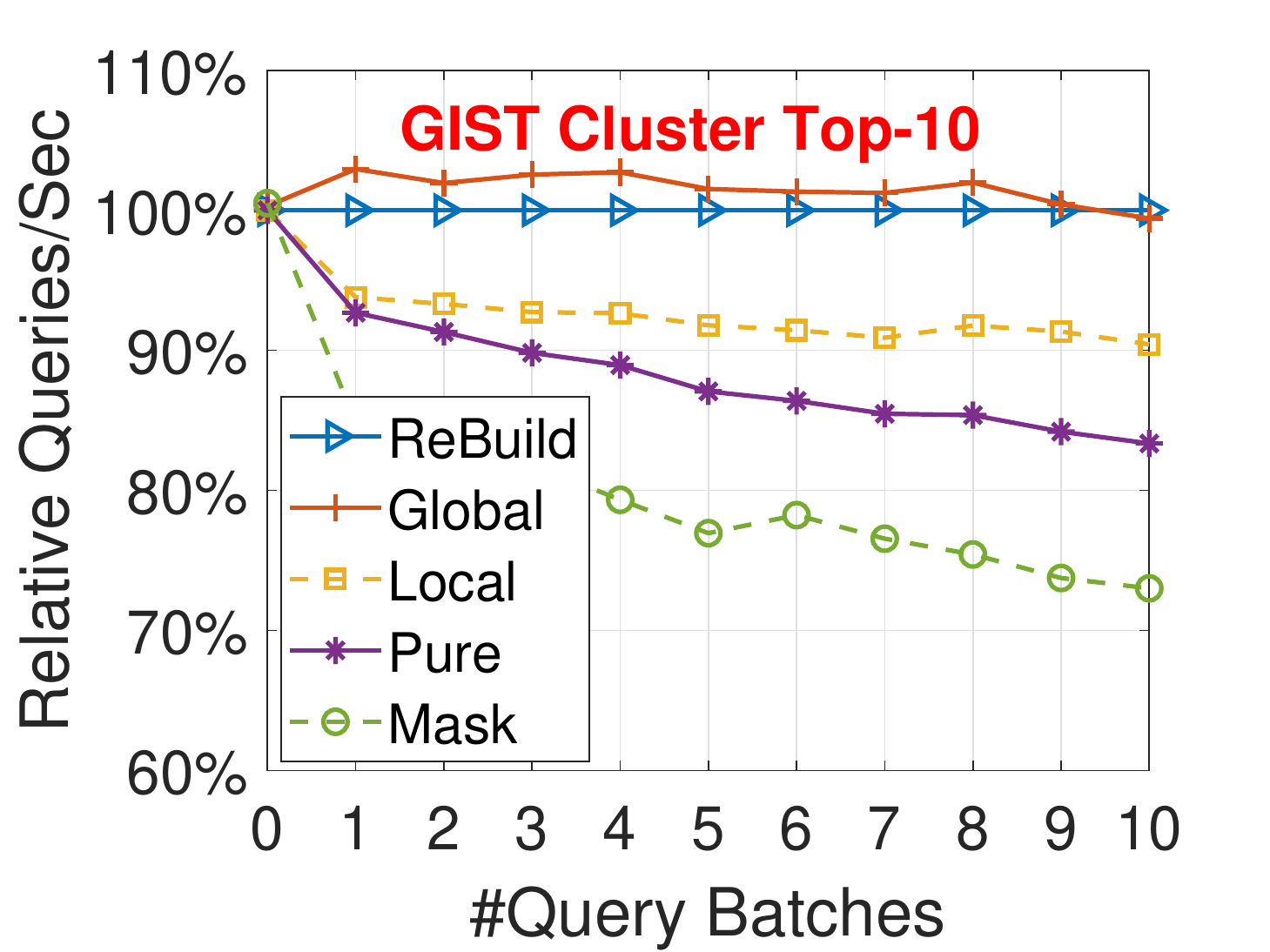}\hspace{-0.1in}
        \includegraphics[width=2.3in]{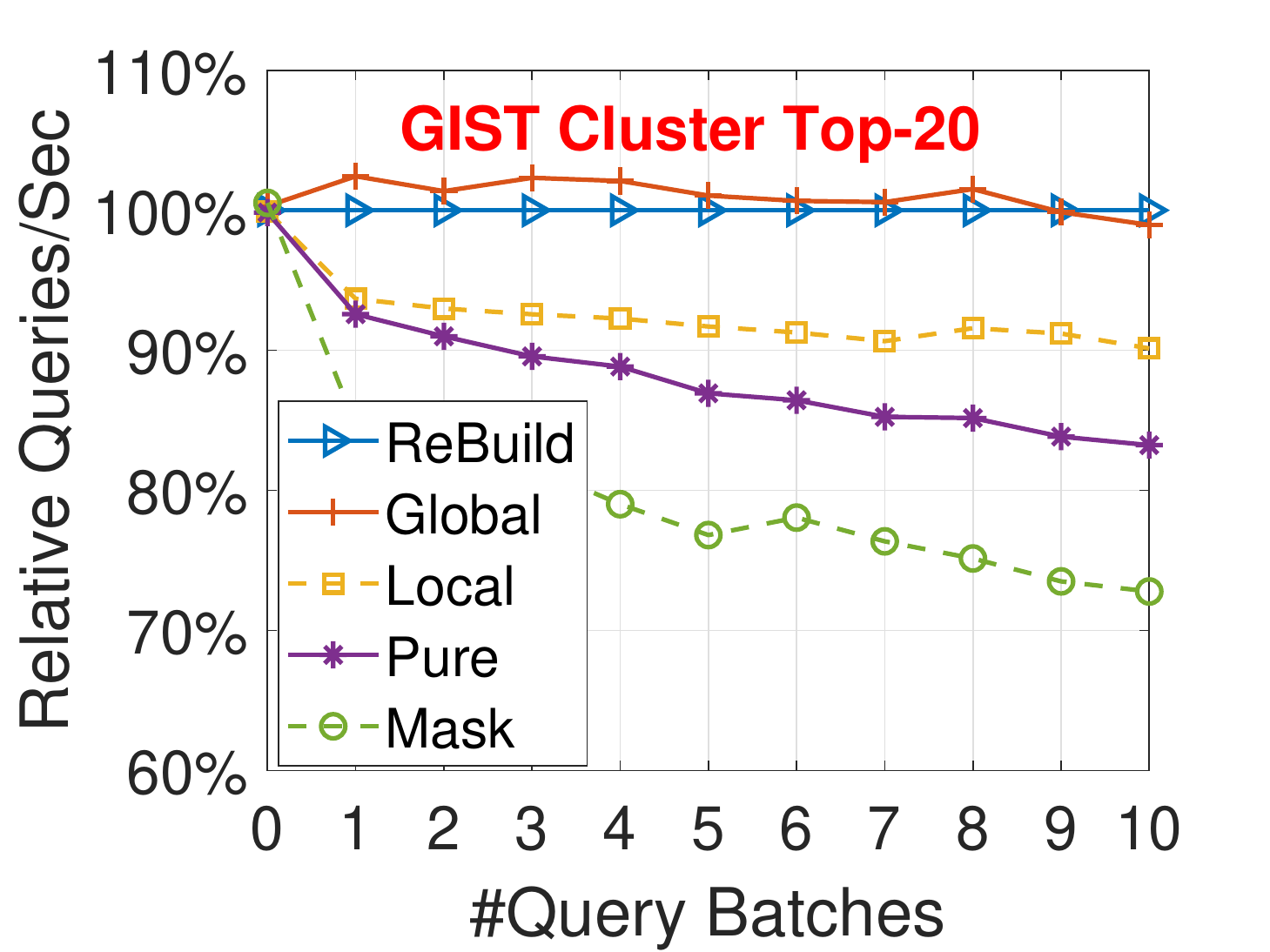}\hspace{-0.1in}
        \includegraphics[width=2.3in]{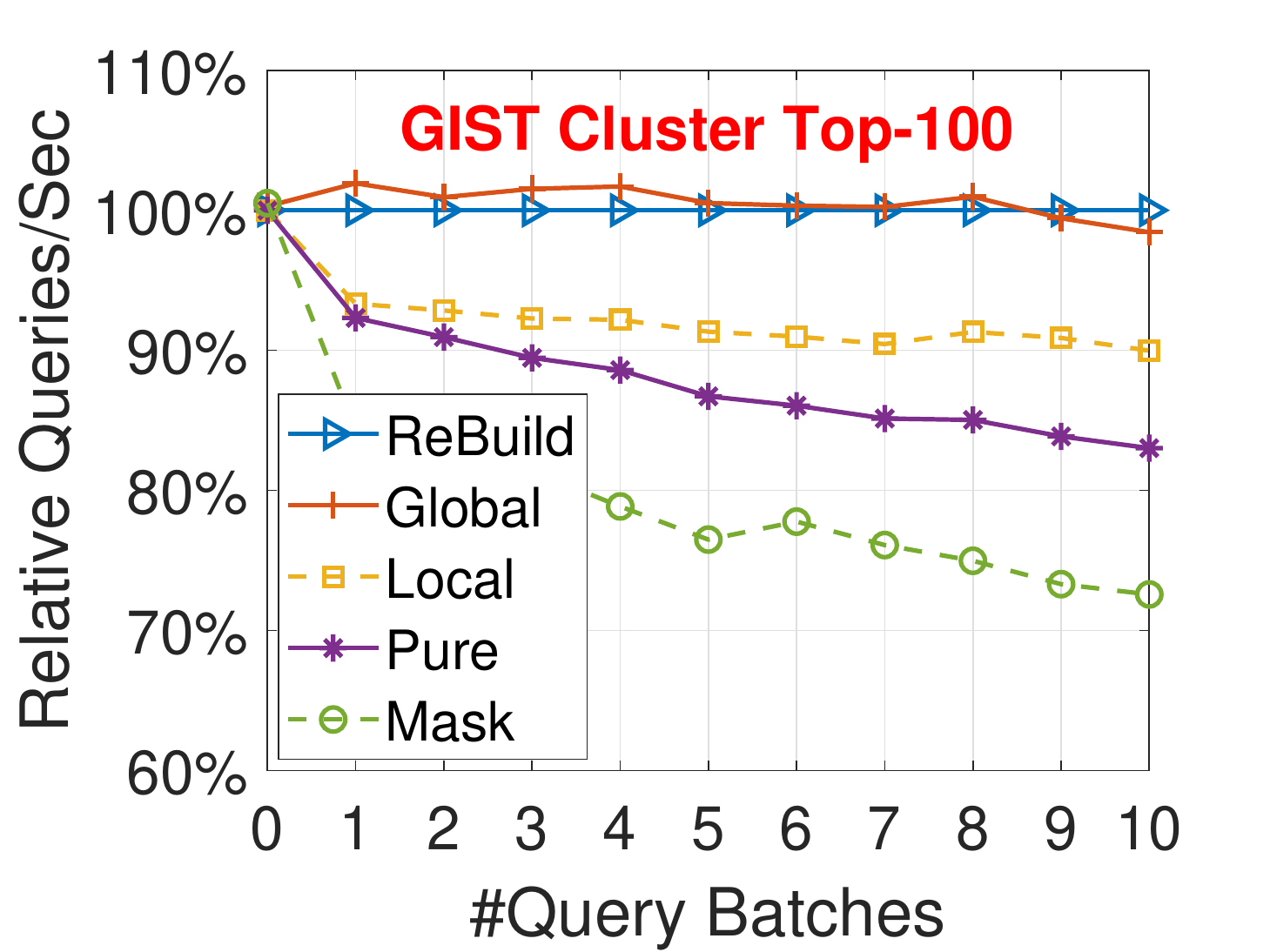}
}

\end{center}

\vspace{-0.2in}

\caption{Relative queries per second to obtain 0.8 recall in each batch. The update pattern for each batch is clustered updates.}\label{fig:cluster_qps} 
\end{figure}

\newpage

    \subsubsection{Random updates}

    The results in Figure~\ref{fig:qps_random} demonstrate that global reconnect achieves superior performance than other delete-update-edge algorithms. For each dataset, the QPSes of different methods start at the same point in query batch 0, which means that they have the same performance when doing ANN search on the base set without an update. Then, as the batch increases, we observe that our global reconnect algorithm has higher QPS than other incremental update methods. In SIFT dataset, the global reconnect algorithm even outperforms reconstruct graph in  QPS at 0.8 Top-10 recall. The same phenomenon also happens in GloVe200 dataset and GIST dataset. For baseline methods, local reconnect achieves the best performance. Pure deletion algorithm has a QPS drop as batch increases, which means the proximity graph becomes inefficiency due to the broken of connectivity. The vertex mask method achieves the lowest QPS in every batch, illustrating that unnecessary visits of expired vertices significantly influence the performance of ANN search on proximity graph.

    \subsubsection{Clustered updates}

    The results in Figure~\ref{fig:cluster_qps} demonstrates the superior performance of global reconnect algorithm in workloads with clustered updates. Similar to Figure~\ref{fig:qps_random}, global reconnect achieves the best QPS in each batch at 0.8 Top-10, Top-20 or Top-100 recall. In the clustered updates settings, we delete some vectors and their nearest neighbors in each update step. Therefore, for vectors located near to these clusters, a large fraction of their nearest neighbors are also deleted. To update their connections, it requires global search on graph to find new nearest neighbors.

    Different from random updates, in clustered updates, we observe that our global reconnect algorithm outperforms graph reconstruction in almost all scenarios.
    Since the graph is constructed incrementally---new edges of an inserted vertex are connected based on the existing vertices, those edges do not accurately reveal the exact nearest neighbors. Our global reconnect algorithm updates edges based on the entire graph. The edges in the graph are refined during the updates---thus, a better performance than the reconstruction is observed.

	\subsection{Overall Execution Time}
	To evaluate the execution time of our proposed global reconnect algorithm and other baselines, we plot the accumulated time versus the number of operations. The accumulated time grows as the number of operations increases. Lower plot in the figure represents less workload execution time and better performance. In this experiments, we consider two major characteristics of real-world scenarios. First, embedding vectors of different usernames, are a much larger group than the online product embeddings. Therefore, the number of queries may be greater or equal to the number of vectors in database. Second,  one single user may look for products several times within an hour. Thus, one query vector may be executed hundreds of times. Combining this two features, we modify the original workload for SIFT. Given 900k base vectors, in every step of the workload, we remove 10k vectors and insert the same amount of vectors, but query with 200k, 1 million or 20 million vectors. The query set is obtained by duplicating original 10k query set. Plots for these 3 workloads corresponds to sub-figures from left to right in Figure~\ref{fig:workload_time}.

\begin{figure}[t]

    \begin{center}
     \mbox{
    \includegraphics[width=3in]{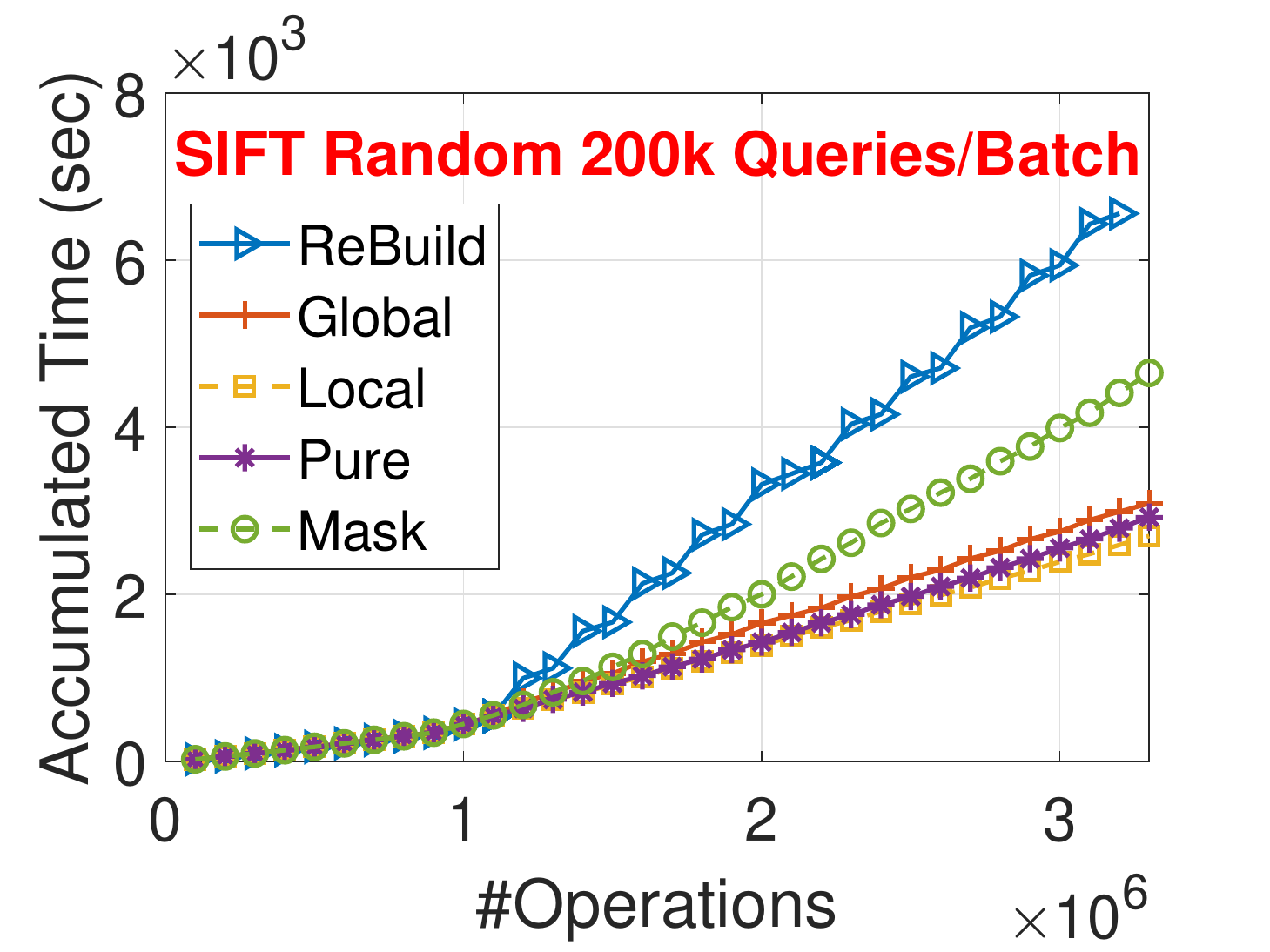}
        \includegraphics[width=3in]{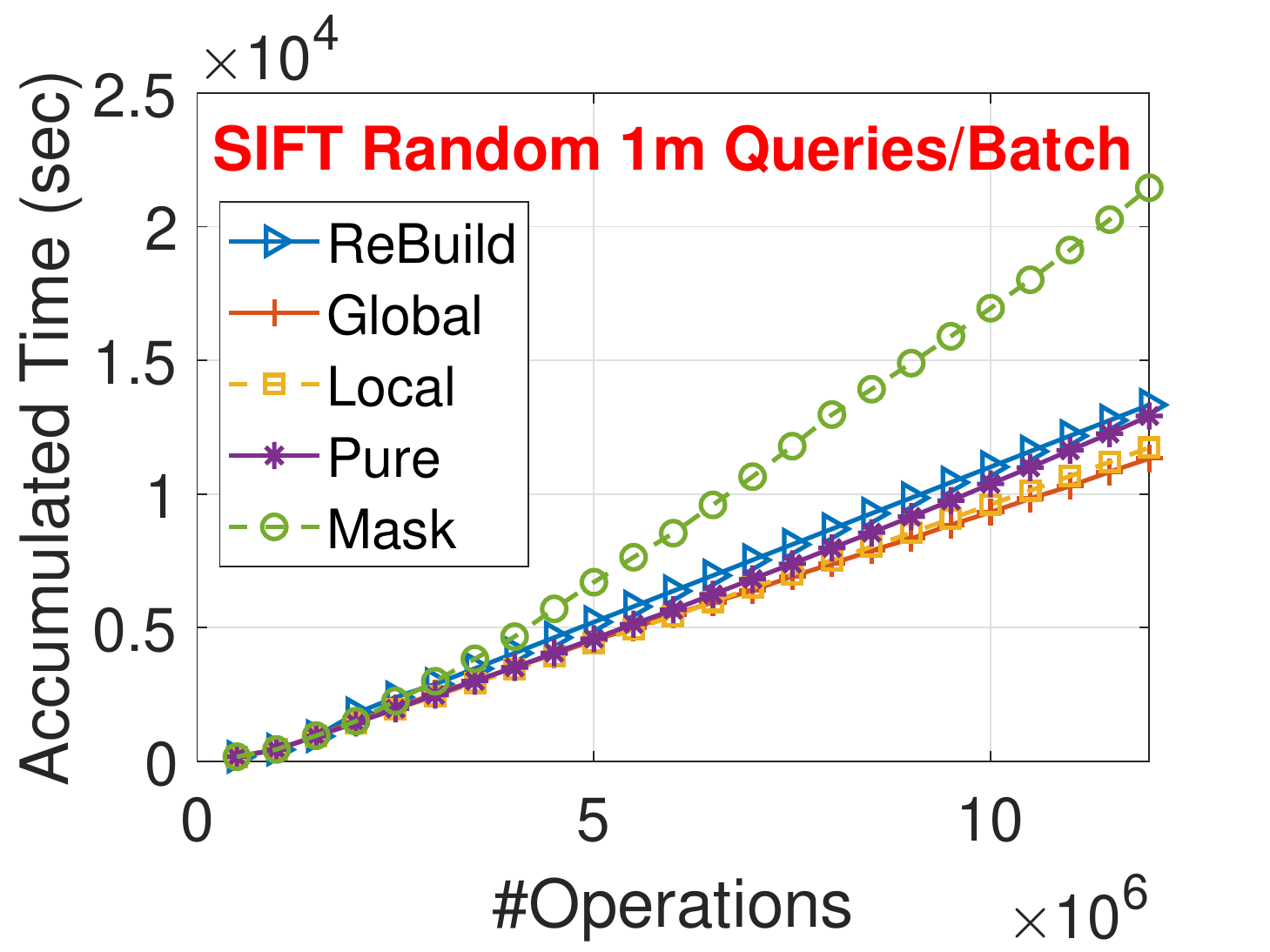}
        }
    \mbox{
     \includegraphics[width=3in]{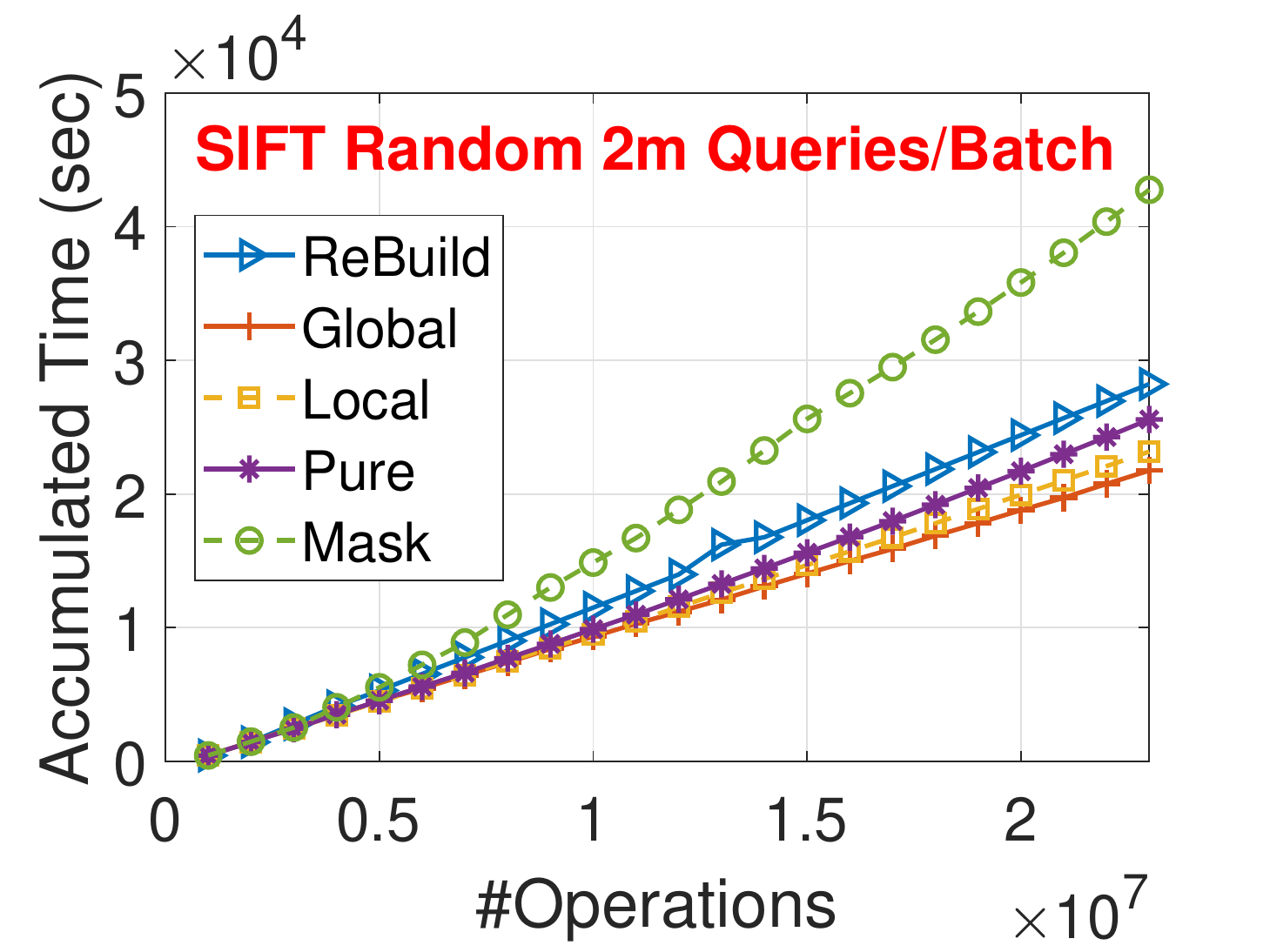}
        \includegraphics[width=3in]{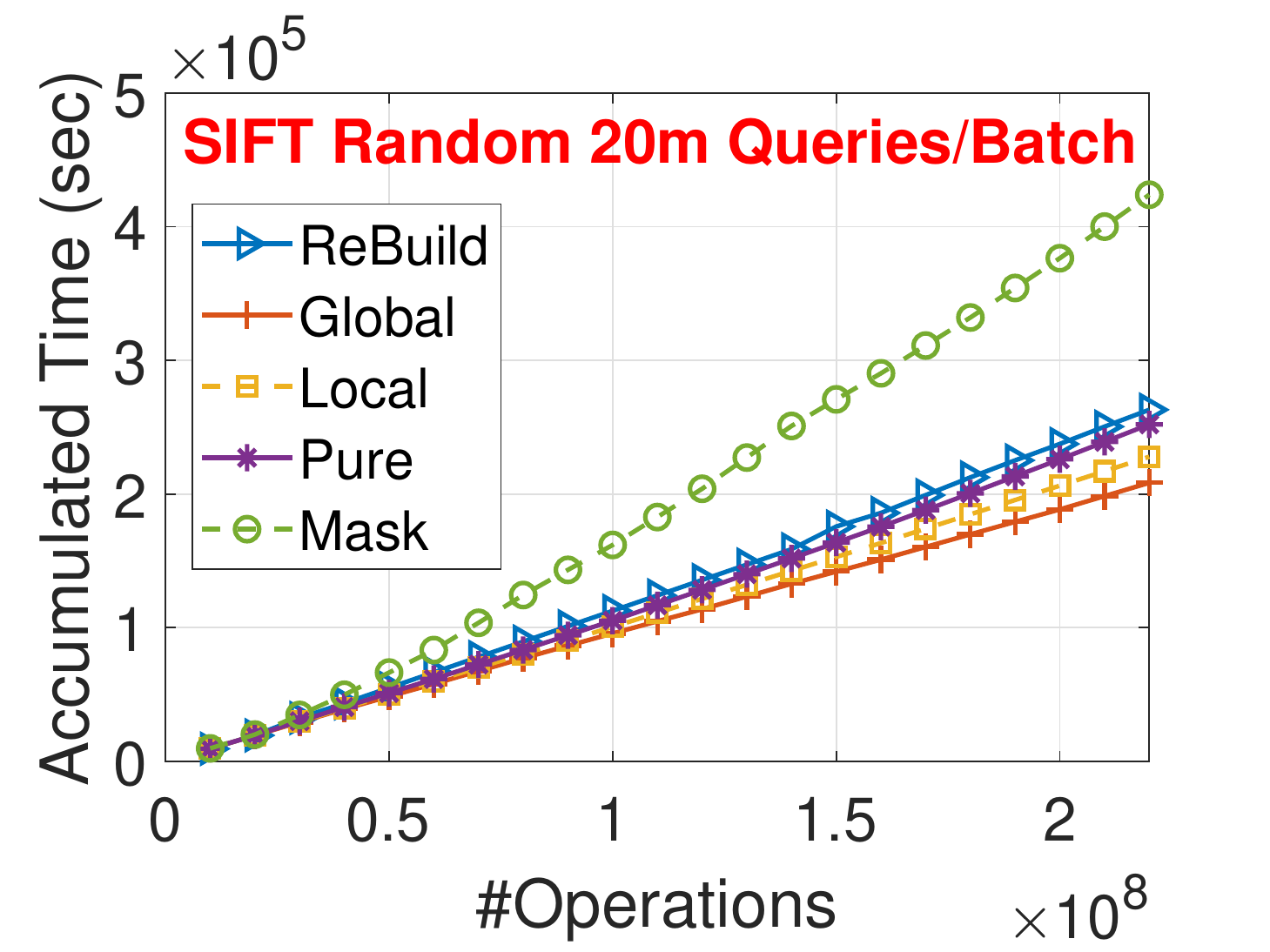}
            }
    \end{center}

    \vspace{-0.15in}

    \caption{Total execution time vs number of operations}\label{fig:workload_time}
    \end{figure}

\newpage

    In Figure~\ref{fig:workload_time} on the left, we show that if we query 200k vectors per batch. Graph reconstruction has the highest time cost because the time cost introduced by reset all vertices on graph. Then, the vertex masking is also time inefficient. Although it deletes vertices in the fastest way, the significant QPS drop in query phase makes it time-consuming. The global reconnect, local reconnect and pure deletion have similar total execution time. Global reconnect and local reconnect have higher execution time due to the cost introduced by reconnection. Then, as the number of queries per batch increases, we observe that our global reconnect algorithm achieves the lowest total execution~time.

    This phenomenon validates the assumption that the updating time cost can be amortized by the time saved by efficient ANN search. The same compensation effect is also observed in graph reconstruction and local reconnect. As the number of queries increases each batch, the total execution time of graph reconstruction outperforms vertex masking and becomes closer to pure deletion. The performance of local reconnect also improves as the margin between it and pure deletion increases when we have more queries. Therefore, in real-world applications that require online ANN search. Given the large amount of query requests, the proposed IPGM framework outperforms static proximity graph based methods due to the advantages taken in each search phase. Meanwhile, IPGM reduces the latency of data deletion by avoiding reindexing the whole dataset as a graph~in~each~step.

    \subsection{Further Discussions}
     \begin{itemize}[leftmargin=*]
      \setlength\itemsep{0.1em}
    \item  To obtain a level of recall at 0.8, the proposed global reconnect algorithm improves the performance of incremental proximity graph in online ANN search by a factor of 10.6$\%$ on average. In some scenarios, global reconnect algorithm also outperforms graph reconstruction by a maximum factor of 18.8$\%$.

    \item The proposed global reconnect algorithm is robust to random update pattern and clustered update pattern. In clustered update when we delete a vector and its nearest neighbors together, the global reconnect algorithm is a better way to maintain a good approximation to Delaunay Graph.

    \item Our proposed global reconnect algorithm reduces the total execution time when the number of queries increases. In real-world settings where the number of queries and data vectors are at the same level, our method has the fastest execution time compared to other methods.
    \end{itemize}

\section{Conclusion}

    Graph-based ANN search methods index the dataset as proximity graph and has shown superiority over other ANN search approaches. However, the inefficiency of proximity graph against incremental data vector insertion and deletion prevents it to be used in large-scale online ANN search. This paper presents a proximity graph maintenance algorithm for online ANN search. We define the incremental proximity graph update problem formally and give theoretical proofs to demonstrate the feasibility of incremental vertex insertion and deletion on the proximity graph. We propose 4 proximity graph update algorithms to preserve the advantages of proximity graph in ANN search when vertices are removed from the graph in different patterns. The experiments on 4 public ANN datasets show that our proposed algorithms outperform the reconstruction baseline by a maximum factor of 18.8$\%$. The proposed updating scheme eliminates the performance drop in online ANN methods on proximity graphs, which enables the deployment in real-world search engines and recommendation systems.

	\newpage
	
	\bibliographystyle{plainnat}
	\bibliography{custom,refs}

\end{document}